\newtheorem{lemma}{Lemma}
\newtheorem{theorem}{Theorem}
\newtheorem{corollary}{Corollary}
\newtheorem*{rem*}{Remark}
\theoremstyle{definition}
\newtheorem{defn}{Definition}
\newtheorem{assumption}{Assumption}
\newcommand{\revA}{Black}
\newcommand{\revB}{Black}
\begin{document}

\title{On the necessity of symmetric positional coupling for string stability}

\author{Dan~Martinec, 
        Ivo~Herman, 
        and~Michael~\v{S}ebek
\thanks{All authors are with the Faculty of Electrical Engineering, Czech
Technical University in Prague, Department of Control Engineering, Karlovo
namesti 13, 121 35 Prague, Czech Republic.
E-mail address:
\{martinec.dan, ivo.herman, sebekm1\}@fel.cvut.cz }
\thanks{The research
was supported by the Grant Agency of the Czech Republic within the projects GACR
13-06894S (I.~H.).}}

\maketitle

\begin{abstract}
We consider a distributed system with identical agents, constant-spacing policy and asymmetric bidirectional control, where the asymmetry is due to different controllers, which we describe by transfer functions. By applying the wave transfer function approach, it is shown that, if there are two integrators in the dynamics of agents, then the positional coupling must be symmetric, otherwise the system is locally string unstable. This finding holds also for a distributed system with a {\color{\revA}generalized path-graph} interaction topology due to the local nature of the wave transfer function. The main advantage of the transfer function approach is that it allows us to analyse the bidirectional control with an arbitrary complex asymmetry in the controllers, for instance, the control with symmetric positional but asymmetric velocity couplings.


\end{abstract}

\begin{IEEEkeywords}
Asymmetric control, string stability, distributed system, travelling waves, wave transfer function
\end{IEEEkeywords}

\IEEEpeerreviewmaketitle


\section{Introduction}



Although each agent in a distributed system is usually well designed and asymptotically stable, the interaction between agents may trigger undesirable phenomena such as string instability. There are several slightly different definitions of the string instability, see \cite{Eyre1998a}, \cite{Ploeg2014} or \cite{Swaroop1996}. They all describe how the disturbance acting on an agent amplifies as it propagates in the system. Similar analytical measures of system performance are harmonic stability \cite{Tangerman2012}, flock stability \cite{Cantos2013} and coherence \cite{Bamieh2012b}.

One of the most studied distributed system is a vehicular platoon, where the interaction topology is a path graph. Each agent of such a system, except for the first and last ones, measures the distance, i.e. its relative position, to the nearest neighbours. There are several spacing policies that can be used to control a vehicular platoon. The two most popular are the constant-spacing policy, where the goal is to keep a constant distance between the agents, and the time-headway policy, where the agents are forced to keep a constant time gap between them. In this paper, we consider the former policy. For such a case, we require two integrators to be present in the open-loop model of each agent so that the agent can track the leader travelling with a constant velocity with the zero steady-state error.

Considering two integrators in the model makes an investigation of the string stability for the constant-spacing policy challenging. It was shown in \cite{Seiler2004a} that the string instability is unavoidable for the agents with two integrators under an unidirectional interaction, therefore, an asymmetric bidirectional scheme was introduced. Later, it was shown in \cite{Hao2011} that the identical asymmetry for all states used for coupling causes a nonzero lower bound on the distributed-system eigenvalues, which guarantees the controllability of a system with even a large number of agents, see \cite{Barooah2009}. \cite{Tangerman2012} and \cite{Herman2014c} show that the disadvantage of such an asymmetric bidirectional control of agents with two integrators is that the system is harmonically unstable, meaning that the $\mathcal{H}_{\infty}$ norm of the transfer functions between the agents scales exponentially with the number of agents in the system.



Recently, papers \cite{Hao2012c} and \cite{Cantos2013} introduce a novel type of asymmetric bidirectional control by assuming nonequal asymmetries between the output states. They showed that different couplings between the positions and velocities in the double integrator system can be beneficial for decreasing the transient and overshoots of the system response. The latter paper also suggests that the symmetry in the positional coupling is necessary for the asymptotic and flock stabilities of an oscillator array. The reasoning of both papers were based on mathematical simulations and reasonable conjectures, which raise the following questions. Can the 'symmetry' condition be generalized for more complex agent dynamics? Is the symmetric coupling necessary for other types of graphs than a path graph? Answering these questions is the main aim of this paper.


It is difficult to answer these questions using the traditional Laplacian approach, e.g. \cite{Olfati-Saber2007}. We therefore use the approach, where the response of the system is decomposed into two travelling waves which are described by an irrational transfer function. The travelling-wave approach originates in the analysis and modelling of the flexible mechanical structures, see \cite{vaughan_application_1968} and \cite{Flotow1985}. The concept was revisited in a series of papers by O'Connor in \cite{OConnor2006} and \cite{OConnor2007}, under the term \emph{wave-based control}, for a control of lumped flexible systems. Paralely, the wave concept was considered for the control of continuous flexible structures in \cite{Halevi2005} and \cite{Halevi2006}. Recently, the travelling-wave approach was applied on distributed control in \cite{Martinec2014a}, \cite{Martinec2014c} and \cite{Martinec2015b}.

In this paper, we adopt the wave approach from \cite{Martinec2014a} and generalize it for a homogenous-asymmetric path graph, which represents a distributed system where all agents are identical but the coupling between them is asymmetric. Unlike the traditional Laplacian approach, the wave approach allows us to describe how the information is locally propagated from an agent to its immediate neighbours. By analyzing this local behaviour, we can study the performance of a distributed system, for instance, the string (in)stability. Moreover, the wave approach allows the treatment of arbitrary asymmetry in the controllers, for instance, different positional and velocity couplings. We show that symmetric coupling between the agent positions, represented by the identical DC gains of the controllers, is necessary for the string stability. This result holds for a constant-spacing policy with an arbitrary agent model, which is a complementary result to prior findings about the string stability of an asymmetric bidirectional control.


{\color{\revA} Further, the paper introduces a new type of string stability, the so-called local string stability, which simplifies the performance analysis of a large distributed system. The definition of the local string stability, similarly as the definition of string stability for instance in \cite{Ploeg2014}, \cite{Seiler2004a} or \cite{Eyre1998a}, captures whether the disturbance acting on an agent amplifies as it propagates through the path-graph system. The main difference is that the local string stability disregards the boundary conditions on the path-graph ends. The advantage is that it allows us to assess a system where the path graph is only a part of the interaction topology, since the analysis of the local string stability does not distinguish whether the boundary condition is caused by the agent on the edge of the system or another, more complex, part of the system. The disadvantage is that the local string stability gives only a sufficient condition for the string stability as is shown in Lemma~\ref{lemma:connection_to_string_stability}.

}




\section{Mathematical preliminaries}

We consider a formation of identical agents with a path-graph interaction topology, for instance, a platoon of vehicles on a highway. The goal of the formation is to move along a line with equal distances between the agents.

The dynamics of agents is described by a linear single-input-single-output model. The output of the model is the position of the agent, $X_{n}(s)$, described as
\begin{align}
  X_n(s) = P(s) U_n(s),
\end{align}
where $n$ denotes the index of $n$th agent, $P(s)$ is the transfer function of the model and $U_n(s)$ is the input to the agent generated by the local controllers onboard the agent. The goal of the controllers is to equalize the distances to the immediate neighbours. Each agent has two controllers $C_{\text{f}}(s)$ and $C_{\text{r}}(s)$ for controlling the front and rear distances of the agent. We describe the controllers by transfer functions, which allows the representation of arbitrary couplings between the agents. In other words, the controllers may be of an arbitrary order and structure. We consider that each agent has the same set of controllers but the two controllers may be different, i.e. $C_{\text{f}}(s) \neq C_{\text{r}}(s)$. Then
\begin{align}
  U_n(s) = &C_{\text{f}}(s)\left(X_{n-1}(s) - X_{n}(s)\right)\nonumber\\
   &+ C_{\text{r}}(s) (X_{n+1}(s) - X_{n}(s)).
\end{align}
The resulting model of the $n$th agent is shown in Fig.~\ref{fig:asymmetric_agent} and described as
\begin{align}
  X_{n}(s) = &M_{\text{f}}(s)(X_{n-1}(s)-X_{n}(s)) \nonumber\\
  &+ M_{\text{r}}(s)(X_{n+1}(s)-X_{n}(s)),\label{eq:eq1}
\end{align}
where $M_{\text{f}}(s) = C_{\text{f}}(s)P(s)$ and $M_{\text{r}}(s) = C_{\text{r}}(s)P(s)$.

\begin{figure}[ht]
 \centering
  \includegraphics[width=0.33\textwidth]{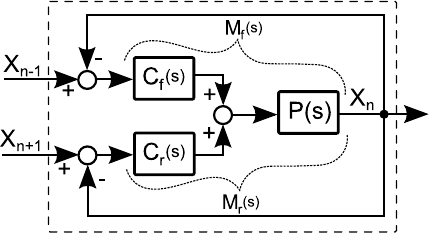}
  \caption{The model of $n$th agent.}
  \label{fig:asymmetric_agent}
\end{figure}

The first agent ($n=0$), the so-called leader, is externally controlled and serves as a reference signal for the distributed system. The rear-end agent ($n=N$) of the path graph is described as
\begin{align}
  X_{N}(s) = M_{\text{f}}(s)(X_{N-1}(s)-X_{N}(s)).\label{eq:eq2}
\end{align}

\begin{assumption}\label{assum:1}
Throughout the paper, we assume that
  \begin{enumerate}[(a)]
    \item $M_{\text{f}}(s)$ and $M_{\text{r}}(s)$ have the same number of $p$ integrators,
    \item $M_{\text{f}}(s)$ and $M_{\text{r}}(s)$ are proper,
    \item $M_{\text{f}}(s)$ and $M_{\text{r}}(s)$ have no CRHP (closed-right half plane) zeros and no CRPH poles except of $p$ poles in the origin.
  \end{enumerate}
\end{assumption}

It is convenient to express $M_{\text{f}}(s)$ and $M_{\text{r}}(s)$ as
\begin{align}
  M_{\text{f}}(s) &= \frac{1}{s^p} \frac{n_{\text{f}}(s)}{d_{\text{f}}(s)} = \frac{1}{s^p} \frac{\sum_{k=0}^{L_{\text{f}}}n_{\text{f},k} s^k}{\sum_{k=0}^{K_{\text{f}}} d_{\text{f},k}s^k},\label{eq:as_in_1a} \\
  M_{\text{r}}(s) &= \frac{1}{s^p} \frac{n_{\text{r}}(s)}{d_{\text{r}}(s)} = \frac{1}{s^p} \frac{\sum_{k=0}^{L_{\text{r}}}n_{\text{r},k} s^k}{\sum_{k=0}^{K_{\text{r}}}d_{\text{r},k}s^k},\label{eq:as_in_1b}
\end{align}
where $K_{\text{f}}$, $L_{\text{f}}$, $K_{\text{r}}$ and $L_{\text{r}}$ are the orders of polynomials $n_{\text{f}}(s)$, $d_{\text{f}}(s)$, $n_{\text{r}}(s)$ and $d_{\text{r}}(s)$, respectively, and $n_{\text{f},k}$, $d_{\text{f},k}$, $n_{\text{r},k}$ and $d_{\text{r},k}$ are their coefficients. Without loss of generality we assume $n_{\text{f},0} \neq 0$, $n_{\text{r},0} \neq 0$ and $d_{\text{f},0} = d_{\text{r},0} = 1$.


The traditional asymmetric bidirectional control, see \cite{Barooah2009} or \cite{Tangerman2012}, assumes that $M_{\text{f}}(s) = \mu M_{\text{r}}(s)$, where $\mu$ is a constant gain. We allow the asymmetry to be more general than scaling and focus on the relation between the $k$th coefficients of (\ref{eq:as_in_1a}) and (\ref{eq:as_in_1b}). 

\begin{defn}\label{def:pos_coupling}
We say that the distributed system has \emph{symmetric positional coupling} if the open-loop model of an agent satisfies
  \begin{align}
  \frac{n_{\text{f},0}}{d_{\text{f},0}} = \frac{n_{\text{r},0}}{d_{\text{r},0}}.
  \end{align}
\end{defn}
In other words, the positional coupling is symmetric if the DC gain of $M_{\text{f}}(s)/M_{\text{r}}(s)$ is equal to one. Similarly, the velocity coupling is symmetric if $n_{\text{f},1}/d_{\text{f},1} = n_{\text{r},1}/d_{\text{r},1}$.

\begin{defn}\label{def:path_graph}
{\color{\revA}
We say that the interaction topology of $N+1$ agents is a \emph{path graph}, if $N \geq 3$, the $n$th agent, $n \in (1,N-1)$, is described by (\ref{eq:eq1}), the $0$th agent is externally controlled and the $N$th agent is described by (\ref{eq:eq2}).}
\end{defn}



\section{Wave transfer function for asymmetric bidirectional connection}

\subsection{Introduction of the wave approach}

The bidirectional property of locally controlled agents causes that any change in the position of the leader is propagated through the distributed system as a \emph{wave}. When the wave reaches the rear-end agent, it reflects and propagates back to the leader, where it reflects again. This section describes the propagation of this wave.

The basic idea is to describe the position of the $n$th agent in a distributed system with a path-graph topology by two components, $A_n(s)$ and $B_n(s)$, which represent two waves propagating along a distributed system in the forward and backward directions, respectively. The mathematical model of a distributed system with a path-graph topology is shown in Fig.~\ref{fig:AWTF_graph_single} and described as
\begin{align}
  X_n(s) &= A_n(s) + B_n(s),\label{eq:pos_decomp}\\
  A_{n+1}(s) &= G_{+}(s)A_n(s),\label{eq:anp_s}\\
  B_n(s) &= G_-(s)B_{n+1}(s)\label{eq:bnp_s},
\end{align}
where $n \in \{1, 2,...,N −1\}$, $G_{+}(s)$ and $G_{-}(s)$ are \emph{asymmetric wave transfer functions} (AWTFs), which describe how the wave propagates in the system in the forward, (\ref{eq:anp_s}), and backward, (\ref{eq:bnp_s}), directions, respectively.
\begin{figure}[ht]
 \centering
  \includegraphics[width=0.35\textwidth]{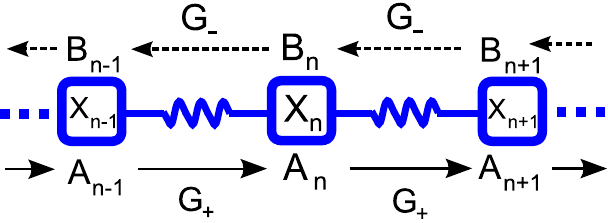}
  \caption{
  Scheme of waves travelling in a distributed system with a path-graph topology. The squares stand for agents and springs illustrate the virtual connections between the agents created by the controllers. Note that all the agents are identical.}
  \label{fig:AWTF_graph_single}
\end{figure}

\begin{lemma}\label{lem:AWTF}
AWTFs $G_{+}(s)$ and $G_{-}(s)$ in (\ref{eq:pos_decomp})-(\ref{eq:bnp_s}) are given by
\begin{align}
  G_{+}(s) &= \frac{1}{2}\beta(s)-\frac{1}{2}\sqrt{\beta^2(s)-4\frac{M_{\textnormal{f}}(s)} {M_{\textnormal{r}}(s)}}, \label{eq:AWTF_1}\\
  G_{-}(s) &= \frac{1}{2}\alpha(s)-\frac{1}{2}\sqrt{\alpha^2(s)-4\frac{M_{\textnormal{r}}(s)} {M_{\textnormal{f}}(s)}} \label{eq:AWTF_2},
\end{align}
where $M_{\textnormal{f}}(s)$ and $M_{\textnormal{r}}(s)$ define the system in (\ref{eq:eq1}) and
\begin{align}
\alpha(s) = \frac{1+M_{\textnormal{f}}(s)+M_{\textnormal{r}}(s)}{M_{\textnormal{f}}(s)}, \;\; \beta(s) = \frac{1+M_{\textnormal{f}}(s)+M_{\textnormal{r}}(s)}{M_{\textnormal{r}}(s)}.\label{eq:AWTF_3}
\end{align}
\end{lemma}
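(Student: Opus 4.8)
The plan is to derive the two asymmetric wave transfer functions by substituting the wave decomposition into the agent dynamics and demanding that the decomposition hold identically for every interior agent. Starting from the position decomposition $X_n(s) = A_n(s) + B_n(s)$ together with the forward and backward propagation rules $A_{n+1}(s) = G_{+}(s)A_n(s)$ and $B_n(s) = G_{-}(s)B_{n+1}(s)$, I would express $X_{n-1}$, $X_{n}$ and $X_{n+1}$ purely in terms of $A_n(s)$ and $B_n(s)$ by advancing or retarding each wave component. Concretely, $A_{n\pm1} = G_{+}^{\pm1}A_n$ and $B_{n\pm1} = G_{-}^{\mp1}B_n$, so each neighbouring position becomes a known multiple of the two wave components at index $n$.

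Next I would substitute these expressions into the interior-agent equation (\ref{eq:eq1}). After multiplying through to clear denominators, the equation takes the form (coefficient)$\cdot A_n + $ (coefficient)$\cdot B_n = 0$, where each coefficient is a rational function of $G_{+}(s)$, $G_{-}(s)$, $M_{\text{f}}(s)$ and $M_{\text{r}}(s)$. Since $A_n$ and $B_n$ represent two independent waves that can be excited separately by the boundary conditions, the relation must hold for arbitrary $A_n$ and $B_n$; hence both coefficients must vanish identically. This decoupling step is the conceptual heart of the argument and yields two separate algebraic equations, one governing $G_{+}$ and one governing $G_{-}$.

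Each of these equations turns out to be quadratic. Collecting the forward-wave terms gives an equation of the form $M_{\text{r}}(s)\,G_{+}^2(s) - \big(1+M_{\text{f}}(s)+M_{\text{r}}(s)\big)G_{+}(s) + M_{\text{f}}(s) = 0$, and dividing by $M_{\text{r}}(s)$ produces $G_{+}^2 - \beta(s)G_{+} + M_{\text{f}}/M_{\text{r}} = 0$ with $\beta(s)$ exactly as defined in (\ref{eq:AWTF_3}). Solving by the quadratic formula gives the two roots $\tfrac12\beta(s) \pm \tfrac12\sqrt{\beta^2(s) - 4 M_{\text{f}}/M_{\text{r}}}$, matching (\ref{eq:AWTF_1}). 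The backward-wave coefficient yields the symmetric quadratic $G_{-}^2 - \alpha(s)G_{-} + M_{\text{r}}/M_{\text{f}} = 0$, whose solution is (\ref{eq:AWTF_2}). The symmetry between $\alpha$ and $\beta$ simply reflects the interchange of the roles of the front and rear controllers.

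The remaining point requiring care is the choice of sign in front of the square root, which is why I expect the branch selection to be the main obstacle rather than the algebra itself. The quadratic admits two roots, but only one corresponds to a physically meaningful, causal and stable wave that decays as it propagates away from its source; the other root represents a spatially growing solution. To justify selecting the minus sign, I would invoke the passivity and integrator structure guaranteed by Assumption~\ref{assum:1}: as $s \to \infty$ the chosen AWTF must remain bounded (indeed tend to a finite value consistent with a proper model), and along the low-frequency axis it must behave like a contraction so that a disturbance does not blow up over an infinite chain. Checking the asymptotic behaviour of both candidate roots and retaining the one that stays inside the unit region then pins down the minus sign in (\ref{eq:AWTF_1})--(\ref{eq:AWTF_2}), completing the identification.
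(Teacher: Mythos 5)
Your proposal follows essentially the same route as the paper's own proof: substituting the wave decomposition into (\ref{eq:eq1}), splitting the resulting identity into its $A_n$ and $B_n$ parts, obtaining the quadratics $G_+^2-\beta G_+ + M_{\text{f}}/M_{\text{r}}=0$ and $G_-^2-\alpha G_- + M_{\text{r}}/M_{\text{f}}=0$, and discarding the plus-sign roots because causality of the propagation forces properness, which the unbounded (growing like $\beta$, resp. $\alpha$, as $s\to\infty$) root violates. One minor remark: your auxiliary ``contraction along the low-frequency axis'' criterion is neither needed nor generally valid here (Theorem~\ref{thm:two_integrators_Mf_Mr} shows the chosen AWTFs can have $\mathcal{H}_\infty$ norm exceeding one), but your primary boundedness-at-infinity argument is exactly the paper's and suffices on its own.
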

\begin{proof}
The proof is based on the same approach as in Section~3.2~of~\cite{Martinec2014a} or also in Section~3.1~of~\cite{OConnor2006}. We note that the Laplace variable `s' is dropped in the following notation.

The substitution of (\ref{eq:pos_decomp})-(\ref{eq:bnp_s}) into (\ref{eq:eq1}) yields
\begin{align}
  A_n + B_n &= M_{\text{f}}\left((G_+^{-1}A_{n}+G_-B_{n})-(A_n+B_n)\right)\nonumber\\
  & + M_{\text{r}}\left((G_+A_{n}+G_-^{-1}B_{n})-(A_n+B_n)\right).\label{eq:lem1_AWTF_pf1}
\end{align}
This equation can be decomposed into $A$ and $B$ parts as
\begin{align}
  1 &= M_{\text{f}}G_{+}^{-1}-M_{\text{f}}+M_{\text{r}}G_+ - M_{\text{r}},\label{eq:lem1_AWTF_pf2a}\\
  1 &= M_{\text{f}}G_{-} -M_{\text{f}} + M_{\text{r}}G_-^{-1} - M_{\text{r}}.\label{eq:lem1_AWTF_pf2b}
\end{align}
We rearrange it and get
\begin{align}
  &G_+^{2}-\beta G_+ + \frac{M_{\text{f}}}{M_{\text{r}}} = 0,\label{eq:lem1_AWTF_pf3a}\\
  &G_-^{2}-\alpha G_- + \frac{M_{\text{r}}}{M_{\text{f}}} = 0,\label{eq:lem1_AWTF_pf3b}
\end{align}
where $\alpha$ and $\beta$ are from (\ref{eq:AWTF_3}). The solutions of the quadratic equations are given as
\begin{align}
  G_{+}(s)_{1,2} &= \frac{1}{2}\beta(s) \pm \frac{1}{2}\sqrt{\beta^2(s)-4\frac{M_{\textnormal{f}}(s)} {M_{\textnormal{r}}(s)}},\label{eq:lem1_AWTF_pf4a} \\
  G_{-}(s)_{1,2} &= \frac{1}{2}\alpha(s) \pm \frac{1}{2}\sqrt{\alpha^2(s)-4\frac{M_{\textnormal{r}}(s)} {M_{\textnormal{f}}(s)}}.\label{eq:lem1_AWTF_pf4b}
\end{align}

We have specified that $G_+(s)$ describes the wave propagating along the system in the forward direction, i.e. from $n$th to $(n+1)$th agent. The propagation of the wave is causal, therefore, the transfer function describing this phenomenon must be either proper or strictly proper. We will show that the transfer functions $G_{+}(s)$ with a plus sign in front of the second term in (\ref{eq:lem1_AWTF_pf4a}) is not proper.

The definition of a proper irrational transfer function is given by Definition~B.1~in~\cite{Curtain2009}, which states: \emph{The function $G$ is proper if for sufficiently large $\rho$
\begin{align}
  \sup_{\text{Re }s\geq 0 \cap |s|>\rho} |G(s)| < \infty.
\end{align}}

Due to Assumption~\ref{assum:1}, the norm of (\ref{eq:lem1_AWTF_pf4a}) can be unbounded only for $s \rightarrow \infty$. In addition, also $\lim_{s\rightarrow \infty} \beta(s) = \infty$, $\lim_{s \rightarrow \infty} \sqrt{\beta^2-4M_{\text{f}}/M_{\text{r}}} = \lim_{s \rightarrow \infty} \beta$. Therefore {\color{\revB} (\ref{eq:AWTF_1}) and (\ref{eq:AWTF_2}) are the only proper solutions.}


\end{proof}

\begin{rem*}
We can see that the AWTFs are linear, however, it may be surprising that they are irrational. {\color{\revB}The independence of $A_n(s)$ and $B_n(s)$ on each other is possible only for the system with infinite number of agents. The infinite dimensionality of the system then makes the transfer functions to be irrational.


A system with a finite number of identical agents and the path-graph topology has the leader and the rear-end agent. They act as boundaries for the travelling waves and cause their reflections.} In basic wave physics, the boundary is assumed to satisfy the spatial causality, that is, the boundary condition does not affect the wave travelling towards it. In other words, (\ref{eq:pos_decomp})-(\ref{eq:bnp_s}) hold regardless of the topological distance and dynamics of the rear-end agent. Therefore, we can apply (\ref{eq:pos_decomp})-(\ref{eq:bnp_s}) to describe the travelling waves even in a system with a finite number of agents, although, these relations are valid only for the agents that are not placed on the boundary. The boundary agents that causes the reflection of the waves must be treated separately, see Lemma~\ref{lem:refl}. In general, any agent that is not described by (\ref{eq:eq1}) represents a boundary for the travelling wave. This applies also to an agent that has more than two neighbours, see \cite{Martinec2015b}. Therefore, the travelling-wave decomposition, (\ref{eq:pos_decomp})-(\ref{eq:AWTF_2}) is valid even for a {\color{\revA}generalized path graph, see Definition~\ref{def:generalized_path_graph}}.

\end{rem*}

We note that the reflections of the wave from the leader and the rear-end agent described by the following Lemma are not used in the derivation of the main result of this paper. However, we feel obliged to derive them to fully cover the issue of waves in asymmetric bidirectional control. Moreover, we use the reflections for numerical verification of the proposed AWTF approach.

\begin{lemma}\label{lem:refl}
  The reflection from the leader and the rear-end agent in the path graph is described by the transfer function $T_{1}(s) = A_1(s)/B_1(s)$ and $T_{N}(s) = B_{N}(s)/A_N(s)$, respectively. The transfer functions are given as
  \begin{align}
    T_{1}(s) &= \frac{A_1(s)}{B_1(s)} = -G_+(s) G_-(s),\label{eq:refl1a} \\
    T_{N}(s) &= \frac{B_N(s)}{A_N(s)} = G_-(s)\frac{G_+(s) -1}{G_-(s) -1},\label{eq:refl1b}
  \end{align}
  respectively.
\end{lemma}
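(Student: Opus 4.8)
The plan is to treat the two reflections separately, in each case substituting the actual boundary dynamics into the wave decomposition (\ref{eq:pos_decomp})--(\ref{eq:bnp_s}) and solving for the ratio of the two wave components at the boundary agent.

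For the leader reflection $T_1$, I would use that the leader is externally controlled, so by linearity the reflection coefficient is the homogeneous response obtained with the external reference set to zero. This imposes $X_0(s) = A_0(s) + B_0(s) = 0$, i.e. $A_0(s) = -B_0(s)$. Evaluating the propagation relations (\ref{eq:anp_s}) and (\ref{eq:bnp_s}) at the leader, $A_1 = G_+ A_0$ and $B_0 = G_- B_1$, and substituting gives $A_1 = -G_+ B_0 = -G_+ G_- B_1$, which is exactly (\ref{eq:refl1a}).

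For the rear-end reflection $T_N$, I would start from the rear-end dynamics (\ref{eq:eq2}) and replace each position by its wave decomposition. Since the agent $N-1$ is interior, (\ref{eq:pos_decomp})--(\ref{eq:bnp_s}) apply and give $X_{N-1} = G_+^{-1} A_N + G_- B_N$, while $X_N = A_N + B_N$. Substituting into (\ref{eq:eq2}) and collecting the $A_N$ and $B_N$ terms produces the single relation $A_N(1 + M_{\text{f}} - M_{\text{f}} G_+^{-1}) + B_N(1 + M_{\text{f}} - M_{\text{f}} G_-) = 0$, so $T_N = B_N/A_N$ equals the negative ratio of the two bracketed coefficients.

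The main obstacle is reducing this ratio to the compact form (\ref{eq:refl1b}), which I expect to require the decomposed defining equations (\ref{eq:lem1_AWTF_pf2a}) and (\ref{eq:lem1_AWTF_pf2b}) from the proof of Lemma~\ref{lem:AWTF}. Using (\ref{eq:lem1_AWTF_pf2a}) to eliminate $M_{\text{f}} G_+^{-1}$ collapses the numerator coefficient to $M_{\text{r}}(G_+ - 1)$, and using (\ref{eq:lem1_AWTF_pf2b}) to eliminate $M_{\text{f}} G_-$ collapses the denominator coefficient to $M_{\text{r}}(G_-^{-1} - 1)$. The common factor $M_{\text{r}}$ cancels, and rewriting $G_-^{-1} - 1 = (1 - G_-)/G_-$ turns the ratio into $G_- (G_+ - 1)/(G_- - 1)$, giving (\ref{eq:refl1b}).
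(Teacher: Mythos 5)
Your derivation of $T_N$ is correct and is essentially the paper's own proof: you substitute the wave decomposition into (\ref{eq:eq2}), collect the $A_N$ and $B_N$ coefficients, and reduce them to $M_{\text{r}}(G_+-1)$ and $M_{\text{r}}(G_-^{-1}-1)$ using the defining relations (\ref{eq:lem1_AWTF_pf2a})--(\ref{eq:lem1_AWTF_pf2b}), which is the same algebra the paper performs with the equivalent identities (\ref{eq:pf_Lem2_6}) and (\ref{eq:pf_Lem2_8}).

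Your derivation of $T_1$, however, has a genuine gap. The three relations you use at the leader, namely $X_0 = A_0 + B_0$, $A_1 = G_+ A_0$ and $B_0 = G_- B_1$, are not available in the framework: the decomposition and propagation relations (\ref{eq:pos_decomp})--(\ref{eq:bnp_s}) are stated only for $n \in \{1,\dots,N-1\}$, and the leader is exactly the kind of boundary agent (not governed by (\ref{eq:eq1})) at which they cease to hold --- determining what replaces them at the boundary is the content of the lemma. You may adopt two of the three relations as definitions, say $A_0 := G_+^{-1}A_1$ and $B_0 := G_- B_1$, but then the remaining one, $X_0 = A_0 + B_0$, is a nontrivial claim: it is equivalent, via the identities $G_+^{-1} = \alpha(1-G_+/\beta)$ and $G_- = \alpha - (\alpha/\beta)G_-^{-1}$, to the equation of motion of agent $1$, i.e.\ (\ref{eq:eq1}) with $n=1$. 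Your argument never invokes that equation, yet it is what determines the leader-side reflection: if agent $1$'s front coupling were some $\widetilde{M}_{\text{f}} \neq M_{\text{f}}$, the reflection coefficient would change, while your reasoning would be unaffected and would still output $-G_+G_-$; an argument insensitive to the data that fix the answer cannot be complete, and the assumed extension of the wave relations across the leader boundary is precisely what needs proving. The repair is to do for the leader what you did for the rear end (and what the paper does): substitute $X_1 = A_1 + B_1$ and $X_2 = G_+ A_1 + G_-^{-1}B_1$ into (\ref{eq:eq1}) for $n=1$, keep $X_0$ as an exogenous input, and simplify with the quadratic identities to obtain $A_1 = G_+ X_0 - G_+ G_- B_1$; your linearity step of setting $X_0 = 0$ is then legitimate and yields (\ref{eq:refl1a}).
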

\begin{proof}
The position $X_1(s)$ in (\ref{eq:eq1}) can be rewritten using (\ref{eq:AWTF_3}) as
\begin{align}
  X_1 = \frac{1}{\alpha}X_0 + \frac{1}{\beta}X_2.
\end{align}
Substituting for $X_1 = A_1+B_1$ and $X_2 = G_+A_1 + G_-^{-1}B_1$ from (\ref{eq:anp_s}) and (\ref{eq:bnp_s}), it yields
\begin{align}
  A_1 &= \frac{1}{\alpha} \frac{1}{1-\dfrac{1}{\beta}G_+} X_0 + \frac{1}{\beta}\frac{\alpha}{\alpha}\frac{G_-^{-1} -\beta}{1-\dfrac{1}{\beta}G_+}B_1.\label{eq:pf_Lem2_4}
\end{align}

The last expression can be further simplified by the following arrangements. First, from (\ref{eq:AWTF_3}) we have
\begin{align}
  \frac{\beta}{\alpha} = \frac{M_{\text{f}}}{M_{\text{r}}}.\label{eq:pf_Lem2_5}
\end{align}
Further, by (\ref{eq:AWTF_1}) and (\ref{eq:pf_Lem2_5}), it can be shown that
\begin{align}
  G_{+}^{-1} &= \frac{M_{\text{r}}}{M_{\text{f}}}\left(\frac{1}{2}\beta+\frac{1}{2}\sqrt{\beta^2- 4\frac{M_{\text{f}}}{M_{\text{r}}}}\right) \nonumber\\
  &= \frac{\alpha}{\beta}(\beta-G_{+}) = \alpha\left(1-\frac{1}{\beta}G_{+}\right).\label{eq:pf_Lem2_6}
\end{align}
Likewise,
\begin{align}
  G_{-}^{-1} &=  \frac{\beta}{\alpha}(\alpha-G_{-}).\label{eq:pf_Lem2_7}
\end{align}
By rearranging (\ref{eq:pf_Lem2_7}), it gives
\begin{align}
  G_- = \alpha - \frac{\alpha}{\beta}G_{-}^{-1}.\label{eq:pf_Lem2_8}
\end{align}
Substituting (\ref{eq:pf_Lem2_5}), (\ref{eq:pf_Lem2_6}) and (\ref{eq:pf_Lem2_8}) into (\ref{eq:pf_Lem2_4}) gives
\begin{align}
  A_1 &= G_+ X_0 - G_+ G_- B_1 = G_+ X_0 + T_1 B_1.
\end{align}

Now, we derive the reflection relation for the rear-end agent. Substituting (\ref{eq:anp_s}), (\ref{eq:bnp_s}) and (\ref{eq:pos_decomp}) into (\ref{eq:eq2}) gives
\begin{align}
  A_N+B_N &= M_{\text{f}}(G_+^{-1}A_N+G_- B_N - A_N - B_N).
\end{align}
By rearranging, it gives
\begin{align}
  B_N = \frac{1+M_{\text{f}}-M_{\text{f}}G_{+}^{-1}}{M_{\text{f}}G_- -M_{\text{f}}-1} A_N.\label{eq:pf_Lem2_11}
\end{align}
By (\ref{eq:pf_Lem2_6}) and (\ref{eq:pf_Lem2_8}), we have $M_{\text{f}}G_+^{-1} = -M_{\text{r}}G_{+} + (M_{\text{r}}+M_{\text{f}}+1)$ and $M_{\text{f}}G_- = -M_{\text{r}}G_{-}^{-1}+(M_{\text{r}}+M_{\text{f}}+1)$. Substituting these into (\ref{eq:pf_Lem2_11}) results in
\begin{align}
  B_N = \frac{M_{\text{r}}(G_+ -1)}{M_{\text{r}}(1-G_-^{-1})}A_N = G_- \frac{G_+ -1}{G_- -1}A_N.
\end{align}

\end{proof}

\subsection{Discussion of the wave approach}\label{sec:wave_approach_discussion}

Let us consider a system of 20 agents with path-graph topology and $M_{\text{f}}(s)$ and $M_{\text{r}}(s)$ defined later in the paper (Section~\ref{sec:simulations}). The transfer function $T_{0,10}(s) = X_{10}(s)/X_0(s)$ can be found by the traditional state-space approach, or by recursive application of (\ref{eq:eq1}) and (\ref{eq:eq2}). This transfer function takes into consideration the interactions among all the other agents and the effect of the boundaries. Therefore, it is an `overall' description of the system, which is well suited for investigating the asymptotic stability of the system.

The `overall' transfer function $T_{0,10}(s)$ can alternatively be found by ratio $(A_{10}(s)+B_{10}(s))/(X_0(s))$. However, this requires to consider the reflections on the leader and the rear-agent, described by (\ref{eq:refl1a}) and (\ref{eq:refl1b}), as follows
\begin{align}
  A_{10} &= G_+^{10}X_0+ T_N G_+^{19} T_1 G_-^{19}A_{10},\label{eq:com_1a}\\
  B_{10} &= G_-^{10}T_N G_+^{20}X_0+ T_N G_+^{19} T_1 G_-^{19}B_{10}.\label{eq:com_1b}
\end{align}
The first term on the right-hand side describes the wave traveling to the agent due to a change of $X_0(s)$. The second term describes the wave returning back to the agent due to the reflections on the leader and the rear-end agent. Equations (\ref{eq:com_1a}) and (\ref{eq:com_1b}) can be simplified as
\begin{align}
  X_{10}(s) = \frac{G_+^{10}(s) + G_-^{10}(s)T_N(s)G_+^{20}(s)}{1-T_N(s)G_-^{19}(s)T_1(s)G_+^{19}(s)}X_0(s).\label{eq:com_2}
\end{align}
It can be shown that the transfer function in (\ref{eq:com_2}) is rational and equal to $T_{0,10}(s)$. We can see that considering the reflections in the system is rather cumbersome.

On the other hand, the most important aspect of the wave approach is that it describes the system from the local point-of-view and allows us to decompose the output of the agents into two travelling waves. It takes a certain time for the wave to propagate in the system, therefore, we can approximate the output of the first agent as
\begin{align}
  X_1(s) \approx A_1(s) \approx G_+(s)X_0(s).\label{eq:com_3}
\end{align}
Similarly, for the second agent,
\begin{align}
  X_2(s) \approx A_2(s) \approx G_+^2(s)X_0(s),\label{eq:com_4}
\end{align}
etc. The approximation gives the exact result in the time-domain until the wave propagates to the last agent, reflects and travels back to the $n$th agent. The important aspect is that the approximation is analytic allowing the analysis of its properties. Based on that, we can infer properties of the multi-agent system.

We should emphasize that the `overall' description in (\ref{eq:com_2}) holds only for one particular system, that is, for 20 agents with the path-graph topology. However, the `local' description, that is $A_{n+1}(s) = G_{+}(s)A_{n}(s)$ and $B_{n}(s) = G_{-}(s)B_{n+1}(s)$, holds for arbitrary graph that contains a path graph due to the spatial causality of the boundary (see the discussion in Remark after Lemma~\ref{lem:AWTF}).

\subsection{Properties of AWTFs}

To be able to track the leader travelling at a constant velocity with the zero steady-state error, we require two integrators to be present in the model of each agent. The DC gains of the AWTFs, in this case, are limited to one as the following Lemma describes.
\begin{lemma}\label{lemma:DC_gains}
  If there is at least one integrator in $M_{\textnormal{f}}(s)$ and $M_{\textnormal{r}}(s)$, defined by (\ref{eq:as_in_1a})-(\ref{eq:as_in_1b}), then the DC gains of the AWTFs given by (\ref{eq:AWTF_1}) and (\ref{eq:AWTF_2}) are
  \begin{align}
    \lim_{s \rightarrow 0} G_+(s) &= \kappa, \;\;\; \lim_{s \rightarrow 0} G_-(s) = 1, \;\;\;\;\;\;\; \textnormal{ if } 0< \kappa < 1,\label{eq:DC_AWTF1a}\\
     \lim_{s \rightarrow 0} G_+(s) &= 1, \;\;\; \lim_{s \rightarrow 0} G_-(s) = 1/\kappa, \;\;\; \textnormal{ if } \kappa \geq 1,\label{eq:DC_AWTF1b}
  \end{align}
  where
  \begin{align}
  \kappa = \lim_{s\rightarrow 0} \frac{M_{\textnormal{f}}(s)}{M_{\textnormal{r}}(s)} = \frac{n_{\textnormal{f},0}}{n_{\textnormal{r},0}}.
  \end{align}
\end{lemma}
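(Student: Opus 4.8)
The plan is to evaluate the closed forms (\ref{eq:AWTF_1})--(\ref{eq:AWTF_3}) directly in the limit $s\to 0$, exploiting that a single integrator already forces $M_{\text{f}}(s)$ and $M_{\text{r}}(s)$ to blow up while their ratio stays finite. First I would record that, from (\ref{eq:as_in_1a})--(\ref{eq:as_in_1b}) with $d_{\text{f},0}=d_{\text{r},0}=1$, one has $M_{\text{f}}(s)\sim n_{\text{f},0}/s^{p}$ and $M_{\text{r}}(s)\sim n_{\text{r},0}/s^{p}$ as $s\to 0$. Hence $\lim_{s\to 0} 1/M_{\text{f}}(s)=\lim_{s\to 0} 1/M_{\text{r}}(s)=0$ and $\lim_{s\to 0} M_{\text{f}}(s)/M_{\text{r}}(s)=n_{\text{f},0}/n_{\text{r},0}=\kappa$, where Assumption~\ref{assum:1} (specifically $n_{\text{f},0},n_{\text{r},0}\neq 0$) keeps $\kappa$ finite and nonzero. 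Substituting these into (\ref{eq:AWTF_3}) then gives $\lim_{s\to 0}\beta(s)=\kappa+1$ and $\lim_{s\to 0}\alpha(s)=1+1/\kappa$.

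Next I would insert these limits into (\ref{eq:AWTF_1}) and (\ref{eq:AWTF_2}). The decisive algebraic observation is that both discriminants collapse to perfect squares: $(\kappa+1)^{2}-4\kappa=(\kappa-1)^{2}$ and $(1+1/\kappa)^{2}-4/\kappa=(1-1/\kappa)^{2}$. Consequently the limits reduce to $\frac{1}{2}(\kappa+1)-\frac{1}{2}|\kappa-1|$ for $G_{+}$ and $\frac{1}{2}(1+1/\kappa)-\frac{1}{2}|1-1/\kappa|$ for $G_{-}$, and splitting into the two cases $0<\kappa<1$ and $\kappa\geq 1$ removes the absolute values and produces exactly (\ref{eq:DC_AWTF1a}) and (\ref{eq:DC_AWTF1b}). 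Note the two cases agree at $\kappa=1$ (both give $G_{+}=G_{-}=1$), so the piecewise statement is consistent at the split point.

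The step needing the most care, and the only genuinely nonroutine one, is fixing the sign of the square root in the limit, i.e. confirming that the branch selected in Lemma~\ref{lem:AWTF} by properness at $s\to\infty$ still delivers the nonnegative root as $s\to 0$ (the point $\kappa=1$ being a branch point where the discriminant vanishes). I would settle this using the elementary relations for the sum and product of the roots of the quadratics (\ref{eq:lem1_AWTF_pf3a})--(\ref{eq:lem1_AWTF_pf3b}): the two roots of the $G_{+}$ equation multiply to $M_{\text{f}}/M_{\text{r}}$ and sum to $\beta$, so in the limit they tend to $\{1,\kappa\}$, while those of the $G_{-}$ equation tend to $\{1,1/\kappa\}$. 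The minus-sign branch is the one of smaller modulus (as it already was at $s\to\infty$, where it yields the proper, attenuating wave), so it selects $\min(1,\kappa)$ for $G_{+}$ and $\min(1,1/\kappa)$ for $G_{-}$; this coincides with (\ref{eq:DC_AWTF1a})--(\ref{eq:DC_AWTF1b}). Once this identification is made, the case analysis is immediate and the proof is complete.
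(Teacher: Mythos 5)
Your proof is correct and follows essentially the same route as the paper's: compute $\lim_{s\to 0}\beta(s)=1+\kappa$ (and $\lim_{s\to 0}\alpha(s)=1+1/\kappa$), substitute into (\ref{eq:AWTF_1})--(\ref{eq:AWTF_2}), recognize the limiting discriminant as a perfect square, and split on $0<\kappa<1$ versus $\kappa\geq 1$ to resolve the resulting absolute value. Your additional verification of the square-root branch via the sum/product of the roots of (\ref{eq:lem1_AWTF_pf3a})--(\ref{eq:lem1_AWTF_pf3b}) is a sound extra check of a point the paper handles implicitly by evaluating the principal (nonnegative) square root of $(1-\kappa)^2$.
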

\begin{proof}
  First, we prove the DC gain of $G_+$. Since there is at least one integrator in $M_{\text{r}}(s)$, then the limit of $\beta(s)$ given by (\ref{eq:AWTF_3}) is
  \begin{align}
    \lim_{s \rightarrow 0}\beta(s) = \lim_{s \rightarrow 0} \left(1+\frac{1}{M_{\text{r}}(s)}+\frac{M_{\text{f}}(s)}{M_{\text{r}}(s)}\right) = 1+\kappa.\label{eq:DC_AWTF3}
  \end{align}
  Substituting from (\ref{eq:DC_AWTF3}) into (\ref{eq:AWTF_1}) gives
  \begin{align}
  \lim_{s \rightarrow 0} G_+(s) &= \frac{1}{2}\left(1+\kappa-\sqrt{(1+\kappa)^2-4\kappa}\right)\nonumber\\
  &= \frac{1}{2}(1+\kappa -|1-\kappa|).\label{eq:lim_Gplus}
  \end{align}
  The proof of the DC gain of $G_-$ is similar.
\end{proof}


\begin{defn}
  We say that the asymmetric wave transfer function $T(s)$ is \emph{asymptotically stable} if it is analytic in the right-half plane and $||T||_{\infty} < \infty$, where $||T||_{\infty} = \sup_{Re(s) >0} |T(s)|$.
\end{defn}
This definition follows the definition of the stability of a linear system by Theorem~A.2~of~\cite{Curtain2009}.

\begin{theorem}\label{lem:AWTF_stab}
If $M_{\textnormal{f}}(s)$ and $M_{\textnormal{r}}(s)$ defined by (\ref{eq:as_in_1a})-(\ref{eq:as_in_1b}) satisfy Assumption~\ref{assum:1} and if the Nyquist plot of
 \begin{align}
  T_{G}(s) = (M_{\textnormal{f}}(s)-M_{\textnormal{r}}(s))^2+ 2M_{\textnormal{f}}(s)+2M_{\textnormal{r}}(s) +1
\end{align}
does not intersect the non-positive real axis, then the AWTFs {\color{\revB}given by (\ref{eq:AWTF_1}) and (\ref{eq:AWTF_2})} are asymptotically stable.
\end{theorem}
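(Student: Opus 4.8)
The plan is to reduce the asymptotic stability of both AWTFs to two properties of the single scalar function $T_G(s)$: the analyticity of its square root in the open right-half plane (RHP) and a boundedness estimate. The starting point is an algebraic identity for the radicands. Using $\beta=(1+M_{\text{f}}+M_{\text{r}})/M_{\text{r}}$ and expanding,
\[
  \beta^2-4\frac{M_{\text{f}}}{M_{\text{r}}}
  =\frac{(1+M_{\text{f}}+M_{\text{r}})^2-4M_{\text{f}}M_{\text{r}}}{M_{\text{r}}^2}
  =\frac{T_G(s)}{M_{\text{r}}^2(s)},
\]
and, symmetrically, $\alpha^2-4M_{\text{r}}/M_{\text{f}}=T_G(s)/M_{\text{f}}^2(s)$. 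Hence in (\ref{eq:AWTF_1})--(\ref{eq:AWTF_2}) the radicals are $\sqrt{T_G}/M_{\text{r}}$ and $\sqrt{T_G}/M_{\text{f}}$, with the branch fixed as in Lemma~\ref{lem:AWTF}, so the only possible branch points of $G_+$ and $G_-$ are the zeros of $T_G$.

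Next I would establish analyticity. By Assumption~\ref{assum:1}(c), $M_{\text{f}}$ and $M_{\text{r}}$ have no zeros in the closed RHP, so $1/M_{\text{r}}$ and $1/M_{\text{f}}$ are analytic in the open RHP; the sole candidate singularity at the origin is removable because Lemma~\ref{lemma:DC_gains} gives finite DC gains for $G_\pm$. It therefore remains to prove that $\sqrt{T_G}$ is analytic in the open RHP. Since the principal branch $\sqrt{\,\cdot\,}$ is analytic on $\mathbb{C}\setminus(-\infty,0]$, it suffices to show that $T_G(s)\notin(-\infty,0]$ whenever $\mathrm{Re}\,s>0$. This is the \emph{interior} counterpart of the hypothesis, which constrains only the \emph{boundary}, as the Nyquist plot is the image of the imaginary axis under $T_G$.

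Passing from this boundary condition to the interior statement is the step I expect to be the main obstacle. By Assumption~\ref{assum:1}, $T_G$ is analytic in the RHP except for a pole at the origin (of order $2p$ when the positional coupling is asymmetric and order $p$ when it is symmetric). I would take a Nyquist $D$-contour indented by a small semicircle around the origin and apply the argument principle to $T_G(s)-w_0$ for a fixed $w_0\in(-\infty,0)$: since the indented region encloses no pole of $T_G$, the number of solutions of $T_G(s)=w_0$ inside equals the winding number of the Nyquist plot about $w_0$. Because the plot avoids the connected set $(-\infty,0]$, this winding number is the same integer $N$ for all such $w_0$; the delicate part is to confirm $N=0$. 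For this I would track the image of the origin indentation, where $T_G\to\infty$ with leading coefficient $(n_{\text{f},0}-n_{\text{r},0})^2$ or $4n_{\text{f},0}$, against the bounded far-field image and a reference point $|w_0|\to\infty$ outside the image curve. With $N=0$ there are no preimages of $(-\infty,0]$ in the open RHP, so $\sqrt{T_G}$, and hence each $G_\pm$, is analytic there.

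Finally, for boundedness I would invoke the properness already established in the proof of Lemma~\ref{lem:AWTF}, which bounds $|G_\pm(s)|$ as $s\to\infty$ in the RHP. On the imaginary axis the hypothesis guarantees $T_G(i\omega)\notin(-\infty,0]$, so $\sqrt{T_G(i\omega)}$ is continuous, and together with $M_{\text{f}}(i\omega),M_{\text{r}}(i\omega)\neq0$ for $\omega\neq0$ and the finite limit at $\omega=0$ from Lemma~\ref{lemma:DC_gains}, this shows $G_\pm$ is bounded on the entire imaginary axis. A Phragm\'en--Lindel\"of argument on the half-plane, using the growth bound from properness, then yields $\|G_\pm\|_\infty<\infty$. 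Analyticity in the open RHP together with this bound is precisely the definition of asymptotic stability for the AWTFs, completing the proof.
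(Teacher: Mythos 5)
Your proposal is correct and follows essentially the same route as the paper's own proof: the same algebraic identity $\beta^2(s)-4M_{\text{f}}(s)/M_{\text{r}}(s)=T_G(s)/M_{\text{r}}^2(s)$ (the paper writes $T_G=1+M_{\text{s}}$ with $M_{\text{s}}=(M_{\text{r}}-M_{\text{f}})^2+2(M_{\text{r}}+M_{\text{f}})$), the same factoring of the radical as $\sqrt{T_G}/M_{\text{r}}$ (resp. $\sqrt{T_G}/M_{\text{f}}$) so that analyticity of $G_+$ and $G_-$ in the open right-half plane reduces to analyticity of $\sqrt{T_G}$ there, and the same appeal to the properness bound established in the proof of Lemma~\ref{lem:AWTF} for the norm bound. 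Where you differ is precisely the step you flag as the main obstacle: the paper passes directly from ``the Nyquist plot of $T_G$ avoids the non-positive real axis'' to ``$\sqrt{T_G(s)}$ is analytic in the ORHP'' with no boundary-to-interior argument, whereas you supply one via the argument principle on the indented $D$-contour — the winding number about $w_0$ is constant on the connected set $(-\infty,0]$ avoided by the plot, and is zero because that set reaches the unbounded component of the complement of the (compact) image curve, so $T_G-w_0$ has no zeros in the enclosed region for any $w_0\in(-\infty,0]$. This makes your version, if anything, more complete than the paper's; it rigorously justifies the implication the paper merely asserts, and your handling of the sign/branch issue (writing the radical as $\sqrt{T_G}/M_{\text{r}}$ with the proper branch of Lemma~\ref{lem:AWTF}) is a cleaner rendering of the paper's informal ``overlapping branch cuts cancel'' remark for $\sqrt{1/M_{\text{r}}^2}$.
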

\begin{proof}
We have shown that the norms of $G_+(s)$ and $G_-(s)$ are bounded in the proof of Lemma~\ref{lem:AWTF}, hence, we focus on their analyticity. We use the result of the complex function analysis, which states that the square root function $f(z)= \sqrt{z}$ is analytic everywhere except for the non-positive real axis (e.g., \cite{stein2010complex}). The second term of $G_+(s)$ is
\begin{align}
f_{2,+}(s) &=  \frac{1}{2}\sqrt{\beta^2(s) - 4\frac{M_{\text{f}}(s)}{M_\text{r}(s)}} = \frac{1}{2}\sqrt{\frac{1+M_{\text{s}}(s)}{M_{\text{r}}^2(s)}},\label{eq:pf_stab4}
\end{align}
where $M_{\text{s}} = M_{\text{r}}^2+M_{\text{f}}^2 + 2M_{\text{r}} + 2M_{\text{f}} - 2M_{\text{r}}M_{\text{f}} = (M_{\text{r}}-M_{\text{f}})^2+2(M_{\text{r}}+M_{\text{f}})$. We apply the same analysis as in Figure 1.9 of Section 1.2 of \cite{Kelly2006}. Term $\sqrt{1/M_{\text{r}}}$ is analytic everywhere except for the non-positive real axis, where it has a branch cut. The non-analyticity is caused by functional discontinuity, which is, in this case, only a sign change. Due to that, the overlapping branch cuts of $\sqrt{1/M_{\text{r}}\cdot 1/M_{\text{r}}}$ cancel each other, which means that $\sqrt{1/M_{\text{r}}^2}$ is continuous and analytic even on the non-positive real axis. Then $f_{2,+}$ is analytic if and only if $\sqrt{1+M_{\text{s}}}$ is analytic. Hence, if the Nyquist plot of $1+M_{\text{s}}$ does not intersect the non-positive real axis, then $f_{2,+}$ is analytic.

The first term of the AWTFs, $\alpha/2$ and $\beta/2$, are rational transfer functions. A rational function is analytic in the ORHP (open-right half plane) if and only if it has no singularities, in this case ORHP zeros and ORHP poles of $M_{\text{f}}$ and $M_{\text{r}}$. Therefore, if $M_{\text{f}}$ and $M_{\text{r}}$ have no ORHP zeros, nor ORHP poles, then $\alpha/2$ and $\beta/2$ are analytic.

Observe that
\begin{align}
  G_+(s) = \frac{1}{2}\beta(s)-\frac{1}{2}\sqrt{\frac{1+M_{\text{s}}(s)}{M_r^2(s)}},
\end{align}
is an analytic function since the Nyquist plot of $(1+M_{\text{s}}(s))$ does not intersect the non-positive real axis and $\beta(s)$ and $\alpha(s)$ do not have ORHP poles due to the condition that $M_{\text{f}}(s)$ and $M_{\text{r}}(s)$ have no CRHP zeros, nor CRHP poles.

\end{proof}

\begin{theorem}\label{thm:two_integrators_Mf_Mr}
If the AWTFs given by (\ref{eq:AWTF_1}) and (\ref{eq:AWTF_2}) are asymptotically stable, there are two integrators in $M_{\textnormal{f}}(s)$ and $M_{\textnormal{r}}(s)$, given by (\ref{eq:as_in_1a})-(\ref{eq:as_in_1b}), and
  \begin{align}
  n_{\textnormal{f},0} &\neq n_{\textnormal{r},0},\label{eq:as_in_2a}\\
  d_{\textnormal{f},0} &= d_{\textnormal{r},0} = 1,\label{eq:as_in_2b}\\
  n_{\textnormal{f},0} >&\, 0,\; n_{\textnormal{r},0}>0,\label{eq:as_in_2c}
  \end{align}
  then either $||G_+(s)||_{\infty} > 1$ or $||G_-(s)||_{\infty} > 1$.
\end{theorem}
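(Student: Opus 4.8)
The plan is to use Lemma~\ref{lemma:DC_gains} to pin the DC gain of one wave transfer function to exactly $1$, and then to examine the local behaviour of that function along the imaginary axis near the origin, where the two integrators force its magnitude above $1$. Since $\kappa = n_{\textnormal{f},0}/n_{\textnormal{r},0} \neq 1$ by the asymmetry assumption (\ref{eq:as_in_2a}), and since interchanging the subscripts $\textnormal{f} \leftrightarrow \textnormal{r}$ swaps $\alpha \leftrightarrow \beta$, $M_{\textnormal{f}}/M_{\textnormal{r}} \leftrightarrow M_{\textnormal{r}}/M_{\textnormal{f}}$, and hence $G_+ \leftrightarrow G_-$ and $\kappa \leftrightarrow 1/\kappa$ in (\ref{eq:AWTF_1})--(\ref{eq:AWTF_3}), I may assume without loss of generality that $0 < \kappa < 1$; the complementary case $\kappa > 1$ then yields $\|G_+\|_\infty > 1$ by the same argument applied to the swapped system. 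For $0 < \kappa < 1$, Lemma~\ref{lemma:DC_gains} gives $\lim_{s\to 0} G_-(s) = 1$, so it suffices to prove that $|G_-(i\omega)| > 1$ for some small $\omega$, which forces $\|G_-\|_\infty > 1$.

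Next I would exploit the defining quadratic (\ref{eq:lem1_AWTF_pf3b}), $G_-^2 - \alpha G_- + M_{\textnormal{r}}/M_{\textnormal{f}} = 0$. Writing $\alpha = 1/M_{\textnormal{f}} + 1 + M_{\textnormal{r}}/M_{\textnormal{f}}$ from (\ref{eq:AWTF_3}) and abbreviating $q = M_{\textnormal{r}}/M_{\textnormal{f}}$, this factors as
\begin{align}
(G_- - 1)(G_- - q) = \frac{G_-}{M_{\textnormal{f}}}.
\end{align}
The two-integrator assumption together with $d_{\textnormal{f},0} = 1$ gives $1/M_{\textnormal{f}}(s) = s^2 d_{\textnormal{f}}(s)/n_{\textnormal{f}}(s) = s^2/n_{\textnormal{f},0} + O(s^3)$, a double zero at the origin; in particular the right-hand side is $O(s^2)$, so $G_- - 1$ has no linear term. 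Because $q(0) = 1/\kappa \neq 1 = G_-(0)$, the factor $G_- - q$ stays bounded away from zero near $s = 0$, the square-root argument $\alpha^2 - 4q \to (1 - 1/\kappa)^2 > 0$ stays off the branch cut, and $G_-$ is analytic at the origin with a real Taylor expansion. Solving the factored identity for $G_- - 1$ and inserting the expansion of $1/M_{\textnormal{f}}$ yields
\begin{align}
G_-(s) = 1 + \frac{s^2}{n_{\textnormal{f},0}\,(1 - 1/\kappa)} + o(s^2).
\end{align}

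Finally, I would evaluate on $s = i\omega$. A real coefficient multiplying $s^2 = -\omega^2$ produces a real contribution, so the leading correction to $\operatorname{Re} G_-(i\omega)$ is $-\omega^2/[n_{\textnormal{f},0}(1 - 1/\kappa)]$, which is strictly positive for $0 < \kappa < 1$ because $n_{\textnormal{f},0} > 0$ and $1 - 1/\kappa < 0$; the next correction is purely imaginary at order $\omega^3$, so $\operatorname{Re} G_-(i\omega) = 1 + O(\omega^2)$ with positive leading coefficient. Hence $|G_-(i\omega)| \geq \operatorname{Re} G_-(i\omega) > 1$ for all sufficiently small $\omega \neq 0$, and by continuity of $G_-$ on the closed right-half plane this inequality persists at points with $\operatorname{Re} s > 0$, giving $\|G_-\|_\infty > 1$. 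The crux of the argument, and the place where all three hypotheses genuinely enter, is the factored identity: the asymmetry $\kappa \neq 1$ keeps the roots $1$ and $q$ apart (so the origin is an ordinary point rather than a branch point, as it would be in the symmetric case), while the two integrators make $1/M_{\textnormal{f}}$ vanish to exactly second order, converting this separation into a real, positive $O(\omega^2)$ bump in $G_-(i\omega)$ above its unit DC value. The main technical care is in justifying analyticity at the origin and in verifying that this $O(\omega^2)$ term is real and is the leading correction to the modulus.
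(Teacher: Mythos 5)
Your proof is correct, and it takes a genuinely different route from the paper's, even though both rest on the same underlying strategy: a local analysis near $s=0$ showing that the real part of one AWTF exceeds $1$ at small nonzero frequencies. The paper fixes the case $\kappa>1$, works directly with the closed-form expression (\ref{eq:AWTF_1}), invokes the half-angle identity $\operatorname{Re}\{\sqrt{z}\}=\sqrt{(|z|+\operatorname{Re}\{z\})/2}$, and grinds $\operatorname{Re}\{G_+(\jmath\omega_0)\}>1$ down to a polynomial inequality in Taylor coefficients of $M_{\text{f}}/M_{\text{r}}$ and $1/M_{\text{r}}$, whose leading term is $l_{x,1}(k_{x,1}-1)^2\omega_0^2=(\kappa-1)^2\omega_0^2/n_{\text{r},0}>0$; the mirror case $0<\kappa<1$ is dispatched with ``similarly.'' You instead (i) collapse the two cases into one via the $\text{f}\leftrightarrow\text{r}$ swap, which exchanges $G_+\leftrightarrow G_-$ and $\kappa\leftrightarrow1/\kappa$ --- a clean way to organize exactly the case split the paper needs anyway, since its squaring step from (\ref{eq:pfV_22}) to (\ref{eq:pfV_23}) is only legitimate when $x-1>0$, i.e. when $\kappa>1$; and (ii) bypass the square-root manipulations entirely by factoring the defining quadratic (\ref{eq:lem1_AWTF_pf3b}) as $(G_--1)(G_--q)=G_-/M_{\text{f}}$ with $q=M_{\text{r}}/M_{\text{f}}$, noting that $\kappa\neq1$ makes $s=0$ an ordinary analytic point of $G_-$ (the square-root argument equals $(1-1/\kappa)^2>0$ there, off the branch cut), and reading off the real Taylor expansion $G_-(s)-1=s^2/[n_{\text{f},0}(1-1/\kappa)]+o(s^2)$, whose sign flips to a positive real bump on $s=\jmath\omega$. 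Your route makes the mechanism transparent (two integrators force second-order vanishing of $1/M_{\text{f}}$; asymmetry keeps the roots $1$ and $q$ apart, so the origin is a regular point rather than a branch point), gives the leading coefficient explicitly, and is actually more careful than the paper on one genuine technicality: the norm is a supremum over the \emph{open} right-half plane, and you perturb off the imaginary axis by continuity, a step the paper omits. The price is that you must justify the branch choice and analyticity of $G_-$ at the origin, which the paper sidesteps by manipulating only real and imaginary parts of rational functions; your justification is valid as stated.
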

\begin{proof}
First, we prove that $||G_+||_{\infty}>1$ if $\kappa>1$. By $\omega_0$ we denote a frequency that is close to $0$ and evaluate the real and imaginary parts of the individual transfer functions as
\begin{align}
  x_1 + \jmath y_1 = \frac{M_{\text{f}}(\jmath\omega_0)}{M_{\text{r}}(\jmath\omega_0)},\;\; x_2 + \jmath y_2 = \frac{1}{M_{\text{r}}(\jmath\omega_0)},\label{eq:pfV_18}
\end{align}
$x = x_1+x_2$ and $y = y_1+y_2$. The Taylor series expansions of (\ref{eq:pfV_18}) evaluated at zero are
\begin{align}
x_1(\omega_0) &= k_{x,1} - k_{x,2}\omega_0^2 + k_{x,3}\omega_0^4 - ...,\label{eq:pfV_19a}\\
y_1(\omega_0) &= k_{y,1}\omega_0 - k_{y,2}\omega_0^3 + k_{y,3}\omega_0^5 - ...,\label{eq:pfV_19b}\\
x_2(\omega_0) &= -l_{x,1}\omega_0^2 + l_{x,2} \omega_0^4 - l_{x,3}\omega_0^6 + ...,\label{eq:pfV_19c}\\
y_2(\omega_0) &= - l_{y,1}\omega_0 ^3 + l_{y,2}\omega_0^5 - l_{y,3}\omega_0^7 +  ...,\label{eq:pfV_19d}
\end{align}
where we assume that $M_{\text{r}}$ has two integrators. We note that $k_{x,1} = \kappa$ and $l_{x,1} = 1 / n_{\text{r},0}$. The other coefficients, $k_{x,2}$, $k_{x,3}$, etc., obtained by Taylor series are not important due to limit $\omega_0 \rightarrow 0$ as we show later in the proof.

Substituting (\ref{eq:pfV_18}) into (\ref{eq:AWTF_1}) gives the real part of $G_+(\jmath\omega_0)$ as
\begin{align}
    \operatorname{Re}\{G_+(\jmath\omega_0)\} =& \frac{1}{2}\left(1+x\right)
    -\frac{1}{2}\sqrt{\frac{|z|+\operatorname{Re}\{z\}}{2}},\label{eq:pfV_20}
\end{align}
where $\operatorname{Re}\{\sqrt{z}\} = \sqrt{|z|/2+\operatorname{Re}\{z\}/2}$, see e.g. Section 3.7.27 in \cite{Abramowitz1964}, and
\begin{align}
  z &= \beta^2(\jmath \omega_0)-4\frac{M_{\text{f}}(\jmath \omega_0)}{M_{\text{r}}(\jmath \omega_0)}\nonumber\\
  &=(1+x^2-y^2+2x-4x_1) + \jmath(2y+2xy-4y_1).\label{eq:pfV_21}
\end{align}

In order to complete the proof, we show that $\operatorname{Re}\{G_+(\jmath\omega_0)\} > 1$, hence, we solve the following inequality
\begin{align}
  \left(1+x\right)
    -\sqrt{\frac{|z|+\operatorname{Re}\{z\}}{2}} > 2.\label{eq:pfV_22}
\end{align}
We simplify it as
\begin{align}
 & \left(2 (x-1)^2 - \operatorname{Re}\{z\} \right)^2 - \operatorname{Re}\{z\}^2 - \operatorname{Im}\{z\}^2 > 0,
\end{align}
substitute for $z$ into it from (\ref{eq:pfV_21}) and obtain
\begin{align}
 & - x_2(x_1-1)^2 - y_1 y_2(x_1+x_2-1) -2x_2^2 (x_1-1)  \nonumber \\
 & - x_2^3 - y_2^2(x_1+x_2) > 0. \label{eq:pfV_23}
\end{align}
We substitute from (\ref{eq:pfV_19a})-(\ref{eq:pfV_19d}) into (\ref{eq:pfV_23}) and get
\begin{align}
  l_{x,1} (k_{x,1}-1)^2 \omega_0^2 + \mathcal{O}(\omega_0^4, \omega_0^6, \omega_0^8, ...) > 0,\label{eq:pfV_24}
\end{align}
where $\mathcal{O}(\omega_0^4, \omega_0^6, \omega_0^8, ...)$ stands for the polynomial with terms $\omega_0^4$, $\omega_0^6$, $\omega_0^8$ etc. The lowest order term in (\ref{eq:pfV_24}) is $\omega_0^2$. Therefore, the inequality in (\ref{eq:pfV_22}) holds for $\omega_0$ close to zero if $l_{x,1} = 1/n_{\text{r},0}> 0$. By (\ref{eq:as_in_2c}) we assume that $n_{\text{r},0} > 0$, hence, $\operatorname{Re}\{G_+(\jmath\omega_0)\}>1$ and $||G_+||_{\infty}>1$. Similarly, it can be shown that $||G_-||_{\infty}>1$ if $0< \kappa <1$ and $n_{\text{f},0} > 0$. Hence, if $n_{\text{f},0} \neq n_{\text{r},0}$ then $\kappa \neq 1$ and either $||G_+(s)||_{\infty} > 1$ or $||G_-(s)||_{\infty} > 1$.
\end{proof}

\section{Implications for the graphs with asymmetric coupling}
\subsection{Path-graph topology}

In this section, we follow the argument given in the Introduction that certain features in the performance of the distributed system can be inferred from the analysis of the wave propagation between the agents because of the local nature of the AWTFs.

\begin{defn}
  We say that the distributed system is \emph{locally string stable} if the AWTFs are asymptotically stable and
  \begin{align}
    ||G_+(s)||_{\infty} \leq 1 \;\;\;\; \text{and} \;\;\;\; ||G_-(s)||_{\infty} \leq 1.
  \end{align}
  Otherwise, the system is called \emph{locally string unstable}.
\end{defn}

Similarly to the string stabilities mentioned in the introduction, the local string stability also deals with the performance of the distributed system. It also describes whether the disturbance acting on an agent amplifies as it propagates through the system. However, the local string stability describes the performance from the local point-of-view without considering the whole distributed system. The local description is particularly advantageous for a large distributed system, where the traditional Laplacian approach is difficult to apply.


The effect of the local string instability is illustrated in Fig.~\ref{fig:local_string_stability}. The left panels show the response of the system with the symmetric bidirectional control, which is locally string stable. The right panels show the response of the system with the asymmetric control, where the asymmetry is in both position and velocity. This makes the system locally string unstable. We can see (top-right panel) that the overshoot of the locally string unstable system increases with the index of the agent. The more agents the wave transmits through, the larger the overshoot is, due to the fact that $||G_{+}(s)||_{\infty} > 1$. This does not happen for the locally string stable system (top-left panel). We can see that even the locally string stable system eventually overshoots the input signal (bottom-left panel) which is due to the reflection of the wave on the last agent.



\begin{figure}[ht]
 \centering
  \includegraphics[width=0.49\textwidth]{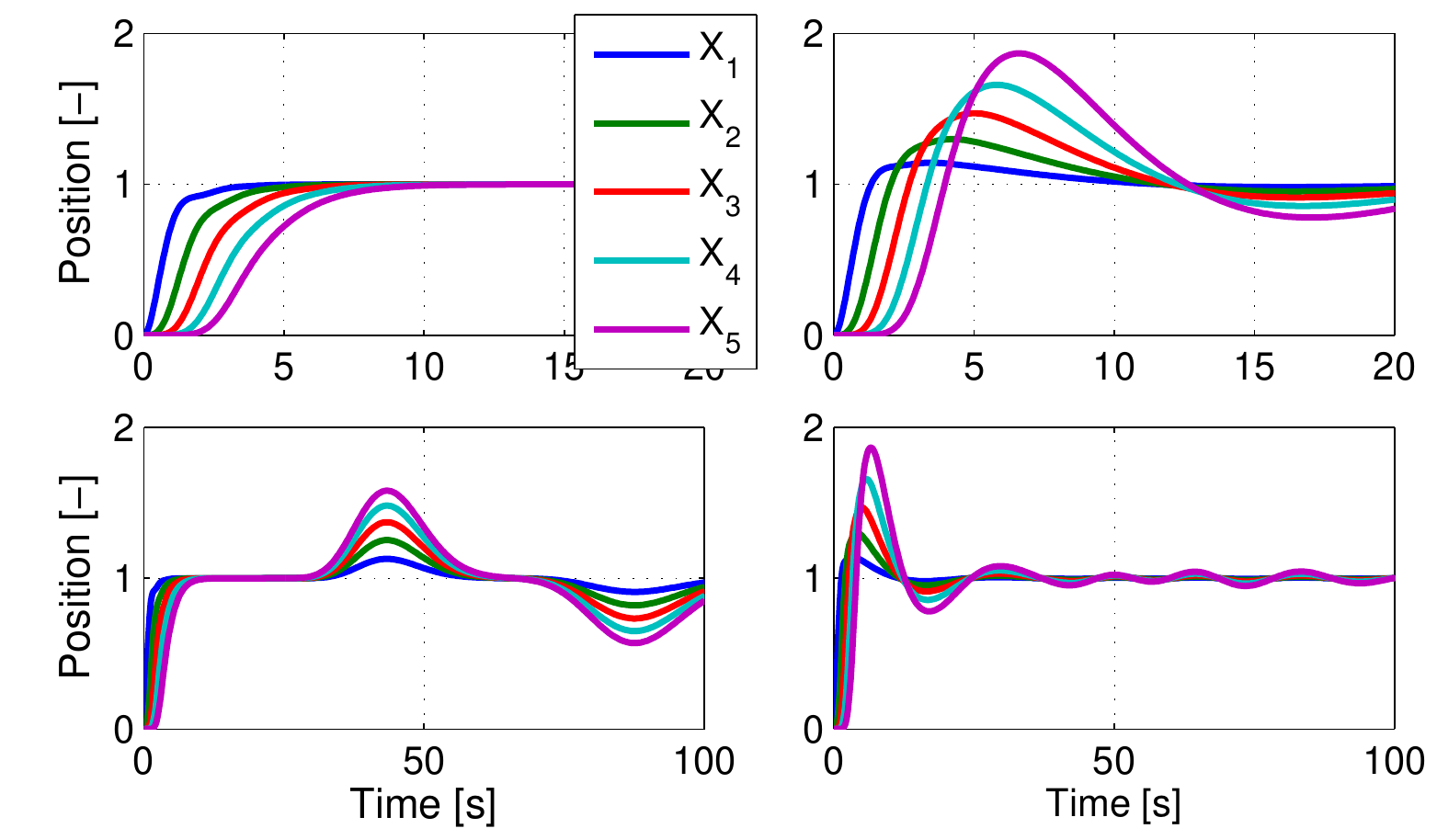}
  \caption{The comparison of the response of the locally string stable (left panels) with the locally string unstable system (right panels).}
  \label{fig:local_string_stability}
\end{figure}

The main result of the paper is given in the following Theorem.

\begin{theorem}\label{thm:sym_coupl}
If i) there are two integrators in the dynamics of the agents, ii) the AWTFs given by (\ref{eq:AWTF_1}) and (\ref{eq:AWTF_2}) are asymptotically stable, and iii) the positional coupling is asymmetric, then the distributed system {\color{\revA}with the path-graph topology defined in Definition~\ref{def:path_graph}} is locally string unstable.
\end{theorem}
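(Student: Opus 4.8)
The plan is to show that the three hypotheses of the Theorem are precisely the hypotheses of Theorem~\ref{thm:two_integrators_Mf_Mr}, invoke its conclusion, and then read off local string instability from the definition. First I would recall that, by the definition of local string stability, a system whose AWTFs are asymptotically stable is locally string \emph{unstable} exactly when $||G_+(s)||_{\infty} > 1$ or $||G_-(s)||_{\infty} > 1$. Hypothesis (ii) already supplies the asymptotic stability, so it suffices to establish that at least one of the two AWTF norms exceeds one.

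Next I would translate the remaining hypotheses into the coefficient conditions (\ref{eq:as_in_2a})--(\ref{eq:as_in_2c}) demanded by Theorem~\ref{thm:two_integrators_Mf_Mr}. Hypothesis (i) gives two integrators in both $M_{\text{f}}(s)$ and $M_{\text{r}}(s)$, i.e. $p=2$ (recall $M_{\text{f}}=C_{\text{f}}P$, $M_{\text{r}}=C_{\text{r}}P$ and Assumption~\ref{assum:1}(a)). The normalization $d_{\text{f},0}=d_{\text{r},0}=1$ of (\ref{eq:as_in_2b}) is the standing convention fixed just after (\ref{eq:as_in_1b}). With that convention, Definition~\ref{def:pos_coupling} says the positional coupling is symmetric if and only if $n_{\text{f},0}=n_{\text{r},0}$; hypothesis (iii) negates this, which is exactly (\ref{eq:as_in_2a}), $n_{\text{f},0}\neq n_{\text{r},0}$, equivalently $\kappa\neq 1$ for $\kappa = n_{\text{f},0}/n_{\text{r},0}$.

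The one condition that is not purely definitional is the sign condition (\ref{eq:as_in_2c}), $n_{\text{f},0}>0$ and $n_{\text{r},0}>0$, and I expect verifying it to be the main obstacle. I would derive it from the asymptotic stability of the AWTFs in (ii): by Lemma~\ref{lemma:DC_gains} the ratio $\kappa=n_{\text{f},0}/n_{\text{r},0}$ governs the DC gains (\ref{eq:DC_AWTF1a})--(\ref{eq:DC_AWTF1b}), and I anticipate a short sign/branch analysis (of the type used in Theorem~\ref{lem:AWTF_stab} and Lemma~\ref{lemma:DC_gains}) showing that asymptotic stability together with two integrators and Assumption~\ref{assum:1}(c) forces $\kappa>0$ and hence a common positive sign for $n_{\text{f},0}$ and $n_{\text{r},0}$. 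If instead positivity of the positional gains is regarded as a standing modelling assumption (attractive, stabilizing coupling), this step is immediate; in either case it is the only place where genuine work beyond bookkeeping is needed.

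With (\ref{eq:as_in_2a})--(\ref{eq:as_in_2c}) and (i)--(ii) in hand, Theorem~\ref{thm:two_integrators_Mf_Mr} yields $||G_+(s)||_{\infty}>1$ or $||G_-(s)||_{\infty}>1$, so by the first paragraph the system is locally string unstable. Finally I would justify the path-graph claim: for the topology of Definition~\ref{def:path_graph} every interior agent satisfies (\ref{eq:eq1}), hence the wave decomposition (\ref{eq:pos_decomp})--(\ref{eq:bnp_s}) and the AWTFs of Lemma~\ref{lem:AWTF} are exactly those governing this system. Since local string stability is defined through these AWTFs, and by the spatial causality of the boundaries (the Remark after Lemma~\ref{lem:AWTF}) the relations and the norms $||G_+(s)||_{\infty}$, $||G_-(s)||_{\infty}$ do not depend on the leader or rear-end agent, the conclusion transfers verbatim to any path graph with $N\geq 3$.
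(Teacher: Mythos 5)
Your proposal follows essentially the same route as the paper's proof: rewrite hypothesis (iii), via the standing normalization $d_{\text{f},0}=d_{\text{r},0}=1$, as $n_{\text{f},0}\neq n_{\text{r},0}$, invoke Theorem~\ref{thm:two_integrators_Mf_Mr} to obtain $||G_+(s)||_{\infty}>1$ or $||G_-(s)||_{\infty}>1$, and conclude local string instability directly from the definition. The only divergence is your scrutiny of the sign condition (\ref{eq:as_in_2c}): the paper's proof silently skips it, in effect treating $n_{\text{f},0}>0$, $n_{\text{r},0}>0$ as a standing modelling assumption, which is exactly your fallback option; so your proposal is, if anything, more careful than the paper here. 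One caution: your suggested derivation of positivity from asymptotic stability of the AWTFs does not close as sketched, because a DC-gain/sign analysis in the spirit of Lemma~\ref{lemma:DC_gains} can at best give $\kappa=n_{\text{f},0}/n_{\text{r},0}>0$, which only forces a \emph{common} sign of the two coefficients, not a positive one (and the both-negative case is not obviously excluded by stability, while it does break the key inequality (\ref{eq:pfV_24}), whose dominant coefficient $l_{x,1}=1/n_{\text{r},0}$ would then be negative); so to be rigorous one should adopt positivity as an assumption, as the paper implicitly does.
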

\begin{proof}
If the positional coupling is asymmetric, then $n_{\text{f},0}/d_{\text{f},0} \neq n_{\text{r},0}/d_{\text{r},0}$ by Definition~\ref{def:pos_coupling}. Since we can always transform $M_{\text{f}}(s)$ and $M_{\text{r}}(s)$ such that $d_{\text{f},0} = d_{\text{r},0}=1$, then $n_{\text{f},0} \neq n_{\text{r},0}$. Therefore, $||G_{+}(s)||_{\infty} > 1$ or $||G_{-}(s)||_{\infty} > 1$, which follows from Theorem~\ref{thm:two_integrators_Mf_Mr}, and the distributed system is locally string unstable.

\end{proof}

We can interpret Theorem~\ref{thm:sym_coupl} as follows. The inequality $||G_{+}(s)||_{\infty} > 1$ causes that the disturbance is amplified as it propagates {\color{\revA}along a path graph} from $X_{i}(s)$ to $X_{i+1}(s)$, from $X_{i+1}(s)$ to $X_{i+2}(s)$, from $X_{i+2}(s)$ to $X_{i+3}(s)$ and so on. The larger the path graph is, the more the disturbance is amplified. Similarly, if $||G_{-}(s)||_{\infty} > 1$, then the disturbance is amplified as it propagates in the opposite direction.


Theorem~\ref{thm:sym_coupl} is in agreement with the results of \cite{Barooah2009}, \cite{Tangerman2012} or \cite{Herman2014c}, where it is stated that, if the asymmetry is in the form of $M_{\text{f}}(s) = \mu M_{\text{r}}(s)$ with $\mu$ being a constant gain, then the system with the path-graph topology is string unstable. However, Theorem~\ref{thm:sym_coupl} is more general since it states that the distributed system is string unstable if the DC gain of $M_{\text{f}}/M_{\text{r}}$ is not equal to one. Hence, it allows the asymmetry to be more complex.

We should emphasize that Theorem~\ref{thm:sym_coupl} does not disprove an asymmetry in the velocity coupling. In fact, the asymmetric velocity coupling may improve the transient of the system, as shows the simulation example in Section~\ref{sec:simulations}.





{\color{\revA}
\section{Implications on the string stability}

This section shows the relation between the local string stability and the string stability. Although there are several definitions of the string stability, we believe that the following definition captures the essence of the string stability. It is used for example in \cite{Ploeg2014}, \cite{Seiler2004a} or \cite{Eyre1998a}.
\begin{defn} \label{def:string_stability_classic}
  A distributed system with the path-graph topology is called string stable if there are upper bounds on the $H_{\infty}$ norms of the transfer functions $\Gamma_i(s)$ that do not depend on the number of agents. $\Gamma_i(s)$ are the transfer functions from the disturbance that acts on the input of arbitrarily agent to the output of another arbitrarily agent of the system.
\end{defn}


The model (\ref{eq:eq1}) of $n$th agent with the disturbance $\Delta_n(s)$ applied to it is given by
\begin{align}
    X_{n}(s) = &M_{\text{f}}(s)(X_{n-1}(s)-X_{n}(s)+\Delta_n(s)) \nonumber\\
  &+ M_{\text{r}}(s)(X_{n+1}(s)-X_{n}(s)).\label{eq:eq_disturbance_n}
\end{align}


\begin{lemma}\label{lemma:connection_to_string_stability}
  If the conditions in Theorem~\ref{thm:sym_coupl} hold and the distributed system with the path-graph topology is locally string unstable, then it is also string unstable in the sense of Definition~\ref{def:string_stability_classic}.
\end{lemma}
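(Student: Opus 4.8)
The plan is to show that local string instability forces the inter-agent transfer functions $\Gamma_i(s)$ of the finite path graph to have $H_{\infty}$ norms that grow without bound in the number of agents $N$, which directly contradicts the $N$-independent bound demanded by Definition~\ref{def:string_stability_classic}. First I would use the hypotheses to pin down a convenient test frequency. Since the AWTFs are asymptotically stable, local string instability means $||G_+||_{\infty}>1$ or $||G_-||_{\infty}>1$; by the forward/backward symmetry of the construction I treat the case $||G_+||_{\infty}>1$, the other being identical after interchanging the roles of $G_+$ and $G_-$. The proof of Theorem~\ref{thm:two_integrators_Mf_Mr} in fact locates the violation arbitrarily close to the origin, since it establishes $\operatorname{Re}\{G_+(\jmath\omega_0)\}>1$ for all sufficiently small $\omega_0\neq0$. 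I therefore fix one such $\omega^{*}\neq0$, for which $g:=|G_+(\jmath\omega^{*})|\geq\operatorname{Re}\{G_+(\jmath\omega^{*})\}>1$, while all of $M_{\text{f}}$, $M_{\text{r}}$, $G_+$, $G_-$ stay finite and nonzero there by Assumption~\ref{assum:1}.

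The key structural fact I would exploit is that at this near-origin frequency the reflection loop is contractive. By Lemma~\ref{lemma:DC_gains} the DC limit of the product is $\lim_{s\to0}G_+(s)G_-(s)=\kappa$ if $0<\kappa<1$ and $=1/\kappa$ if $\kappa\geq1$, which in either case is strictly below one because asymmetric positional coupling forces $\kappa\neq1$. By continuity I may choose $\omega^{*}$ so that in addition $\rho:=|G_+(\jmath\omega^{*})G_-(\jmath\omega^{*})|<1$. Then in an `overall' transfer function of the form (\ref{eq:com_2}) the denominator $1-T_N G_-^{N-1}T_1 G_+^{N-1}$ carries the factor $(G_+G_-)^{N-1}$ whose magnitude $\rho^{N-1}\to0$; combined with the boundedness of $T_1$, $T_N$ from (\ref{eq:refl1a})--(\ref{eq:refl1b}), this shows the reflection corrections vanish as $N\to\infty$ and cannot cancel the direct wave.

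Next I would write the response to a disturbance $\Delta_m(s)$ injected at an interior agent $m$ using (\ref{eq:eq_disturbance_n}). For $n\neq m$ the homogeneous relation (\ref{eq:eq1}) holds, so the wave decomposition applies and the disturbance at $m$ acts as a source launching a forward wave; propagating it to a downstream agent $j>m$ multiplies its amplitude by $G_+^{\,j-m}$. Hence the transfer function $\Gamma_{m\to j}(s)=X_j(s)/\Delta_m(s)$ equals $G_+^{\,j-m}(s)$ times a source-coupling factor built from $M_{\text{f}}$, $G_+$, $G_-$, plus reflection terms that are $O(\rho^{N})$ by the previous paragraph. Evaluating at $\jmath\omega^{*}$, the source factor is a fixed, bounded quantity that can be kept bounded away from zero by a small perturbation of $\omega^{*}$, whereas $|G_+^{\,j-m}(\jmath\omega^{*})|=g^{\,j-m}$; thus for $N$ (and hence the admissible separation $j-m$) large enough, $|\Gamma_{m\to j}(\jmath\omega^{*})|$, and a fortiori $||\Gamma_{m\to j}||_{\infty}$, exceeds any prescribed constant.

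The conclusion is then immediate: for every candidate $N$-independent bound $\gamma$ there is an $N$ and a pair $(m,j)$ with $||\Gamma_{m\to j}||_{\infty}>\gamma$, so no uniform bound on the family $\Gamma_i$ can exist and the system is string unstable in the sense of Definition~\ref{def:string_stability_classic}. I expect the main obstacle to be precisely the bookkeeping of the boundary reflections, i.e.\ showing that the backward reflected waves do not destructively cancel the growing forward wave. The observation of the second paragraph, that $|G_+G_-|<1$ near the origin, is what defuses this obstacle, because it makes both the reflection denominator and the reflected numerator terms decay geometrically in $N$ exactly at the frequency where the forward wave amplifies; the remaining steps, namely isolating the nonzero source-coupling factor and invoking the monotone growth of $g^{\,j-m}$, are routine.
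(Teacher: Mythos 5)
Your proposal is correct and follows essentially the same route as the paper's own proof: both select a near-DC test frequency where, by the proof of Theorem~\ref{thm:two_integrators_Mf_Mr}, $|G_+(\jmath\omega_0)|>1$ (or $|G_-(\jmath\omega_0)|>1$) while, by Lemma~\ref{lemma:DC_gains}, $|G_+(\jmath\omega_0)G_-(\jmath\omega_0)|<1$, so that in the wave-domain ``overall'' transfer function the reflection terms decay geometrically in $N$ while the direct forward wave $G_+^{N}$ grows without bound, defeating any $N$-independent bound. The only difference is that the paper injects the disturbance at agent $1$ and measures at agent $N$, where the transfer function from $\Delta_1$ coincides with that from $X_0$ (eq.~(\ref{eq:proof_tf_dist_1})), which makes your undetermined ``source-coupling factor'' trivial and thereby closes the one claim you left unproved.
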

\begin{proof}
The transfer function from $\Delta_1(s)$ to $X_N(s)$ is the same as the transfer function from $X_0(s)$ to $X_N(s)$. The way how to find this transfer function in terms of the AWTFs is shown in Section~\ref{sec:wave_approach_discussion}. It is given as
  \begin{align}
   \frac{X_N(s)}{\Delta_1(s)} = \frac{G_+^{N}(s) + T_N(s)G_+^{N}(s)}{1-T_N(s)G_-^{N-1}(s)T_1(s)G_+^{N-1}(s)}.\label{eq:proof_tf_dist_1}
  \end{align}
Analogously, the transfer function from $\Delta_N(s)$ to $X_1(s)$ is
\begin{align}
   \frac{X_1(s)}{\Delta_N(s)} = \frac{G_-^{N-1}(s)G_{\Delta}(s) + T_1(s)G_-^{N-1}G_{\Delta}(s)}{1-T_N(s)G_-^{N-1}(s)T_1(s)G_+^{N-1}(s)},\label{eq:proof_tf_dist_2}
\end{align}
where $G_{\Delta}(s) = B_N(s)/\Delta_N(s)$. The exact value of $G_{\Delta}(s)$ is redundant for the proof.

The proof of Theorem~\ref{thm:two_integrators_Mf_Mr} shows that the asymmetric positional coupling causes that either $|G_+(\jmath\omega_0)|>1$ or $|G_-(\jmath\omega_0)|>1$ for $\omega_0 \rightarrow 0$. However, we can see from the DC gain analysis in Lemma~\ref{lemma:DC_gains} that $|G_+(\jmath\omega_0)G_-(\jmath\omega_0)|<1$.

The substitution of $|G_+(\jmath\omega_0)|>1$ into (\ref{eq:proof_tf_dist_1}) for $N\rightarrow \infty$ gives
\begin{align}
 \lim_{N \rightarrow \infty, \omega_0 \rightarrow 0} \left|\frac{X_N(\jmath\omega_0)}{\Delta_1(\jmath\omega_0)}\right|  = \infty.
\end{align}
We can see that the $H_{\infty}$ norm of the transfer function $X_N(s)/\Delta_1(s)$ is not bounded in the number of agents if $||G_+(s)||_{\infty}>1$. Analogously, if $||G_-(s)||_{\infty}>1$ then the $H_{\infty}$ norm of the transfer function $X_1(s)/\Delta_N(s)$ is not bounded.

\end{proof}

The combination of Theorem~\ref{thm:sym_coupl} and Lemma~\ref{lemma:connection_to_string_stability} gives the following Corollary.

\begin{corollary}
  The symmetric positional coupling is a necessary (but not sufficient) condition for the string stability of the system with the path-graph topology.
\end{corollary}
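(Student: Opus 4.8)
The plan is to establish the two halves of the statement separately: \emph{necessity} by a direct contrapositive that chains the two preceding results, and \emph{insufficiency} by exhibiting a witness system for which symmetric positional coupling holds yet string stability fails. The necessity part is essentially a composition and should require no new computation, whereas the insufficiency part is where a little care is needed.

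For necessity I would argue by contraposition. Suppose the positional coupling is \emph{not} symmetric, i.e.\ $n_{\mathrm{f},0}/d_{\mathrm{f},0} \neq n_{\mathrm{r},0}/d_{\mathrm{r},0}$ in the sense of Definition~\ref{def:pos_coupling}. Keeping the standing hypotheses in force---two integrators in the agent dynamics and AWTFs that are asymptotically stable, as in Assumption~\ref{assum:1} together with the hypotheses of Theorem~\ref{thm:sym_coupl}---Theorem~\ref{thm:sym_coupl} immediately yields that the path-graph system is locally string unstable. Feeding this conclusion into Lemma~\ref{lemma:connection_to_string_stability}, whose hypotheses are exactly the conditions of Theorem~\ref{thm:sym_coupl} plus local string instability, gives that the system is string unstable in the sense of Definition~\ref{def:string_stability_classic}. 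Contrapositively, any string-stable system (under these hypotheses) must have symmetric positional coupling, which is the asserted necessity.

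For insufficiency I would produce a system that satisfies Definition~\ref{def:pos_coupling} (symmetric positional coupling, $\kappa=1$) but is nonetheless not string stable. Symmetry of the DC gains only forces $\lim_{s\to 0}G_+(s)=\lim_{s\to 0}G_-(s)=1$ via Lemma~\ref{lemma:DC_gains}; it does \emph{not} control $|G_+(\jmath\omega)|$ or $|G_-(\jmath\omega)|$ at nonzero frequencies. Thus it suffices to choose $M_{\mathrm{f}}$ and $M_{\mathrm{r}}$ with equal leading numerator coefficients but an asymmetric velocity coupling ($n_{\mathrm{f},1}/d_{\mathrm{f},1}\neq n_{\mathrm{r},1}/d_{\mathrm{r},1}$) so that $\|G_+\|_{\infty}>1$ is attained at some $\omega\neq 0$; by the same propagation argument as in Lemma~\ref{lemma:connection_to_string_stability}, the $H_\infty$ norm of the end-to-end transfer function then grows without bound in $N$. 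Alternatively, one may invoke the known scaling results for symmetric bidirectional control of double integrators in \cite{Barooah2009}, \cite{Tangerman2012}, \cite{Herman2014c}, which already fail to be string stable even in the fully symmetric case, so symmetric positional coupling alone cannot suffice.

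The main obstacle is the insufficiency half: one must exhibit a counterexample that \emph{simultaneously} respects all of Assumption~\ref{assum:1} (properness, the integrator count, absence of CRHP zeros and of CRHP poles other than the origin) and keeps the AWTFs asymptotically stable in the sense of Theorem~\ref{lem:AWTF_stab}, while still forcing $\|G_+\|_\infty>1$ or $\|G_-\|_\infty>1$ away from the origin. Verifying that such a choice of velocity asymmetry does not inadvertently introduce a branch-cut crossing---i.e.\ that the Nyquist plot of $T_G(s)$ still avoids the non-positive real axis---is the only delicate bookkeeping; the necessity direction carries no difficulty beyond citing the two prior results in sequence.
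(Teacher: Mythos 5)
Your necessity argument is precisely the paper's proof: the corollary is stated as the combination of Theorem~\ref{thm:sym_coupl} (asymmetric positional coupling $\Rightarrow$ local string instability) with Lemma~\ref{lemma:connection_to_string_stability} (local string instability under those hypotheses $\Rightarrow$ string instability in the sense of Definition~\ref{def:string_stability_classic}), read contrapositively. Where you diverge is the insufficiency half: the paper does not prove it at all---the parenthetical ``(but not sufficient)'' is left as a remark implicitly backed by the cited literature on symmetric bidirectional platoons. Your first proposed route (a velocity-asymmetric witness with $\|G_+\|_\infty>1$ attained at some $\omega^\ast\neq 0$) is plausible but incomplete as written: you never exhibit such $M_{\mathrm{f}},M_{\mathrm{r}}$, and the unbounded-growth step borrowed from Lemma~\ref{lemma:connection_to_string_stability} needs more than $|G_+(\jmath\omega^\ast)|>1$; one must also check that the denominator $1-T_N(s)G_-^{N-1}(s)T_1(s)G_+^{N-1}(s)$ stays bounded away from zero as $N\to\infty$ at that frequency (e.g.\ via $|G_+(\jmath\omega^\ast)G_-(\jmath\omega^\ast)|<1$), which the lemma's proof establishes only in the limit $\omega_0\to 0$ using the DC-gain analysis of Lemma~\ref{lemma:DC_gains}. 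Your fallback---invoking the known failure of string stability for fully symmetric bidirectional control of double integrators, as in \cite{Barooah2009} and \cite{Tangerman2012}---is the clean and sufficient way to settle that half, and is in effect what the paper relies on; with that reading, your proposal is correct and, on the part the paper actually proves, identical in approach.
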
}


{\color{\revB}
\section{Constant-time-headway spacing policy}

This section briefly discusses the applicability of the transfer function approach to the constant-time-headway spacing policy, see for instance \cite{Eyre1998a}, described by
\begin{align}
U_n(s) =& C_{\text{r}}(s)(X_{n-1}(s)-(1+hs)X_n(s))\nonumber\\
&+C_{\text{f}}(s)(X_{n+1}(s)-(1+hs)X_n(s)),
\end{align}
where $h$, a constant-headway time between the agents, is to be kept fixed.

The Asymmetric Wave Transfer Functions for the constant-time-headway control are given by the following lemma.
\begin{lemma}
  \begin{align}
  G_{+,\textnormal{H}}(s) &= \frac{1}{2}\beta_{\textnormal{H}}(s) - \frac{1}{2}\sqrt{\beta_{\textnormal{H}}^2(s) - 4\frac{M_{\textnormal{f}}(s)}{M_{\textnormal{r}}(s)}},\label{eq:AWTF_time_headway_1}\\
  G_{-,\textnormal{H}}(s) &= \frac{1}{2}\alpha_{\textnormal{H}}(s) - \frac{1}{2}\sqrt{\alpha_{\textnormal{H}}^2(s) - 4\frac{M_{\textnormal{r}}(s)}{M_{\textnormal{f}}(s)}},\label{eq:AWTF_time_headway_2}
\end{align}
where
\begin{align}
  \beta_{\textnormal{H}}(s) &= \frac{1+(1+hs)M_{\textnormal{f}}(s)+(1+hs)M_{\textnormal{r}}(s)}{M_{\textnormal{r}}(s)},\\
  \alpha_{\textnormal{H}}(s) &= \frac{1+(1+hs)M_{\textnormal{r}}(s)+(1+hs)M_{\textnormal{f}}(s)}{M_{\textnormal{f}}(s)}.
\end{align}
\end{lemma}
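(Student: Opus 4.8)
The plan is to reproduce the derivation of Lemma~\ref{lem:AWTF} essentially verbatim, changing only the interconnection law. Applying $P(s)$ to the given $U_n(s)$ and keeping the front/rear convention of \eqref{eq:eq1}, the time-headway analogue of the agent model reads
\begin{align}
  X_n = M_{\textnormal{f}}\bigl(X_{n-1}-(1+hs)X_n\bigr)+M_{\textnormal{r}}\bigl(X_{n+1}-(1+hs)X_n\bigr),
\end{align}
with the Laplace argument suppressed and $M_{\textnormal{f}}=C_{\textnormal{f}}P$, $M_{\textnormal{r}}=C_{\textnormal{r}}P$. First I would insert the wave ansatz \eqref{eq:pos_decomp}--\eqref{eq:bnp_s} through the substitutions $X_{n-1}=G_{+,\textnormal{H}}^{-1}A_n+G_{-,\textnormal{H}}B_n$ and $X_{n+1}=G_{+,\textnormal{H}}A_n+G_{-,\textnormal{H}}^{-1}B_n$, exactly as in the passage to \eqref{eq:lem1_AWTF_pf1}.

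Since the forward and backward waves are independent, I would then split the resulting identity into its $A_n$- and $B_n$-parts, obtaining the analogues of \eqref{eq:lem1_AWTF_pf2a}--\eqref{eq:lem1_AWTF_pf2b},
\begin{align}
  1 &= M_{\textnormal{f}}G_{+,\textnormal{H}}^{-1}+M_{\textnormal{r}}G_{+,\textnormal{H}}-(1+hs)(M_{\textnormal{f}}+M_{\textnormal{r}}),\\
  1 &= M_{\textnormal{f}}G_{-,\textnormal{H}}+M_{\textnormal{r}}G_{-,\textnormal{H}}^{-1}-(1+hs)(M_{\textnormal{f}}+M_{\textnormal{r}}).
\end{align}
Multiplying the first equation by $G_{+,\textnormal{H}}$ and dividing by $M_{\textnormal{r}}$, and likewise the second by $G_{-,\textnormal{H}}$ and dividing by $M_{\textnormal{f}}$, then collecting terms, yields the quadratics $G_{+,\textnormal{H}}^{2}-\beta_{\textnormal{H}}G_{+,\textnormal{H}}+M_{\textnormal{f}}/M_{\textnormal{r}}=0$ and $G_{-,\textnormal{H}}^{2}-\alpha_{\textnormal{H}}G_{-,\textnormal{H}}+M_{\textnormal{r}}/M_{\textnormal{f}}=0$, in which $\beta_{\textnormal{H}}$ and $\alpha_{\textnormal{H}}$ emerge precisely as stated. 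The only change relative to Lemma~\ref{lem:AWTF} is that the coupling term $M_{\textnormal{f}}+M_{\textnormal{r}}$ now carries the factor $(1+hs)$, so this factor enters only the numerators of $\alpha_{\textnormal{H}}$ and $\beta_{\textnormal{H}}$, while the constant terms $M_{\textnormal{f}}/M_{\textnormal{r}}$ and $M_{\textnormal{r}}/M_{\textnormal{f}}$ under the radicals are untouched. The quadratic formula then gives the candidate roots with $\pm$ signs, as in \eqref{eq:lem1_AWTF_pf4a}--\eqref{eq:lem1_AWTF_pf4b}.

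The step I expect to demand the most care is the root selection, i.e.\ transferring the properness argument from the end of the proof of Lemma~\ref{lem:AWTF} to the enlarged $\beta_{\textnormal{H}}$ and $\alpha_{\textnormal{H}}$. Analyticity and boundedness of $\beta_{\textnormal{H}},\alpha_{\textnormal{H}}$ in the closed right half-plane away from infinity are immediate under Assumption~\ref{assum:1}, because $(1+hs)$ is a polynomial and contributes no CRHP singularities. The delicate point is the behaviour at infinity: one must confirm that the added term makes $\beta_{\textnormal{H}}(s)\rightarrow\infty$ and that $4M_{\textnormal{f}}/M_{\textnormal{r}}$ remains negligible against $\beta_{\textnormal{H}}^2$, so that $\lim_{s\rightarrow\infty}\sqrt{\beta_{\textnormal{H}}^2-4M_{\textnormal{f}}/M_{\textnormal{r}}}=\lim_{s\rightarrow\infty}\beta_{\textnormal{H}}$ and hence the minus sign cancels the leading growth. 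Writing $\beta_{\textnormal{H}}=\beta+hs\,(1+M_{\textnormal{f}}/M_{\textnormal{r}})$ exhibits $\beta_{\textnormal{H}}$ as a perturbation of the constant-spacing $\beta$ by a term of higher degree in $s$ with the radicand altered only through this term, so I expect the same conclusion as in Lemma~\ref{lem:AWTF} to carry over; the minus-sign roots \eqref{eq:AWTF_time_headway_1}--\eqref{eq:AWTF_time_headway_2} are then the unique proper solutions in the sense of Definition~B.1 of \cite{Curtain2009}. Finally, $G_{-,\textnormal{H}}$ follows from $G_{+,\textnormal{H}}$ by interchanging $M_{\textnormal{f}}$ and $M_{\textnormal{r}}$.
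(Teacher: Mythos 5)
Your proposal is correct and follows essentially the same route as the paper's own proof: substitute the wave ansatz \eqref{eq:pos_decomp}--\eqref{eq:bnp_s} into the time-headway agent model (with $M_{\textnormal{f}}$ paired to the front spacing as in \eqref{eq:eq1}), decompose into $A$- and $B$-parts, obtain the quadratics $G_{\pm,\textnormal{H}}^{2}-\beta_{\textnormal{H}}G_{+,\textnormal{H}}+M_{\textnormal{f}}/M_{\textnormal{r}}=0$ and $G_{-,\textnormal{H}}^{2}-\alpha_{\textnormal{H}}G_{-,\textnormal{H}}+M_{\textnormal{r}}/M_{\textnormal{f}}=0$, and discard the non-proper roots exactly as at the end of Lemma~\ref{lem:AWTF}. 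Your treatment of the root selection at $s\rightarrow\infty$ is in fact slightly more explicit than the paper's, which simply asserts that \eqref{eq:AWTF_time_headway_1} and \eqref{eq:AWTF_time_headway_2} are the only proper solutions.
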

\begin{proof}
  The derivation (\ref{eq:AWTF_time_headway_1}) and (\ref{eq:AWTF_time_headway_2}) is analogous to that in Lemma~\ref{lem:AWTF}. In this case (\ref{eq:lem1_AWTF_pf1}) changes to
  \begin{align}
     A_n + B_n &= M_{\text{f}}\left((G_+^{-1}A_{n}+G_-B_{n})-(1+hs)(A_n+B_n)\right)\nonumber\\
  & + M_{\text{r}}\left((G_+A_{n}+G_-^{-1}B_{n})-(1+hs)(A_n+B_n)\right).
  \end{align}
  The decomposition into $A$ and $B$ parts yields
  \begin{align}
  &G_{+,\text{H}}^{2}-\beta_{\text{H}} G_{+,\text{H}} + \frac{M_{\text{f}}}{M_{\text{r}}} = 0,\\
  &G_{-,\text{H}}^{2}-\alpha_{\text{H}} G_{-,\text{H}} + \frac{M_{\text{r}}}{M_{\text{f}}} = 0.
\end{align}
The only proper solutions of these quadratic equations are (\ref{eq:AWTF_time_headway_1}) and (\ref{eq:AWTF_time_headway_2}).
\end{proof}

The local string stability analysis can also be carried out for a constant-time-headway policy. However, we have to emphasize that the proof of Theorem~\ref{thm:two_integrators_Mf_Mr} can not be directly used in this case due to the parameter $h$, but (\ref{eq:pfV_24}) must be changed to
\begin{align}
  (l_{x,1} &(k_{x,1}-1)^2 + hk_{y,1}(1-k_{x,1})-h^2k_{x,1}^2) \omega_0^2 \nonumber\\
  &+ \mathcal{O}(\omega_0^3, \omega_0^4, \omega_0^5, ...) > 0,
\end{align}
where $l_{x,1} = 1/n_{\text{r},0} > 0$, $k_{x,1} > 1$ and $k_{y,1} = (n_{\text{f},1}n_{\text{r},0} - n_{\text{f},0}n_{\text{r},1} - d_{\text{f},1}n_{\text{f},0}n_{\text{r},0} + d_{\text{r},1}n_{\text{f},0}n_{\text{r},0})/n_{\text{r},0}^2$. We can see that the dominant term
\begin{align}
  \left(l_{x,1} (k_{x,1}-1)^2 + hk_{y,1}(1-k_{x,1})-h^2k_{x,1}^2\right)
\end{align}
becomes negative for certain values of $h$, $k_{x,1}$ and $k_{y,1}$, which disproves that the real value of $G_{+,\text{H}}(s)$ is greater than one.

In fact, numerical simulations, which are not included in the paper, indicate that Theorem~\ref{thm:two_integrators_Mf_Mr} does not hold for the constant-time-headway policy. This is in agreement with the fact that the string stability can usually be corrected by the constant-time-headway policy, see Section~3.4~of~\cite{Eyre1998a}.
}

\section{Mathematical simulations}\label{sec:simulations}


The mathematical simulations compare three different control strategies for two different path graphs. The results are shown in Fig.~\ref{fig:sym_asym_comp}, where the agent is modelled as a double integrator with a linear model of friction controlled by a PI controller, that is
\begin{align}
  M_{\text{f}} = \frac{1}{3}\frac{4s+4}{s^2(s/3+1)}\label{eq:mat_sim_1}
\end{align}
for all three cases. $M_{\text{r}}$ are
\begin{align}
\addtocounter{equation}{1}
  M_{\text{r}} = M_{\text{f}},\;\;
  M_{\text{r}} = \frac{2.5}{4}M_{\text{f}},\;\;
  M_{\text{r}} = \frac{1}{3} \frac{2.5s+4}{s^2(s/3+1)},
  \tag{\arabic{equation} a, b, c}
  \label{eq:mat_sim_2}
\end{align}
for the left, middle and right panels, respectively.
\newcounter{multi_eq1}
\setcounter{multi_eq1}{\value{equation}}

\begin{figure*}[ht]
 \centering
  \includegraphics[width=0.95\textwidth]{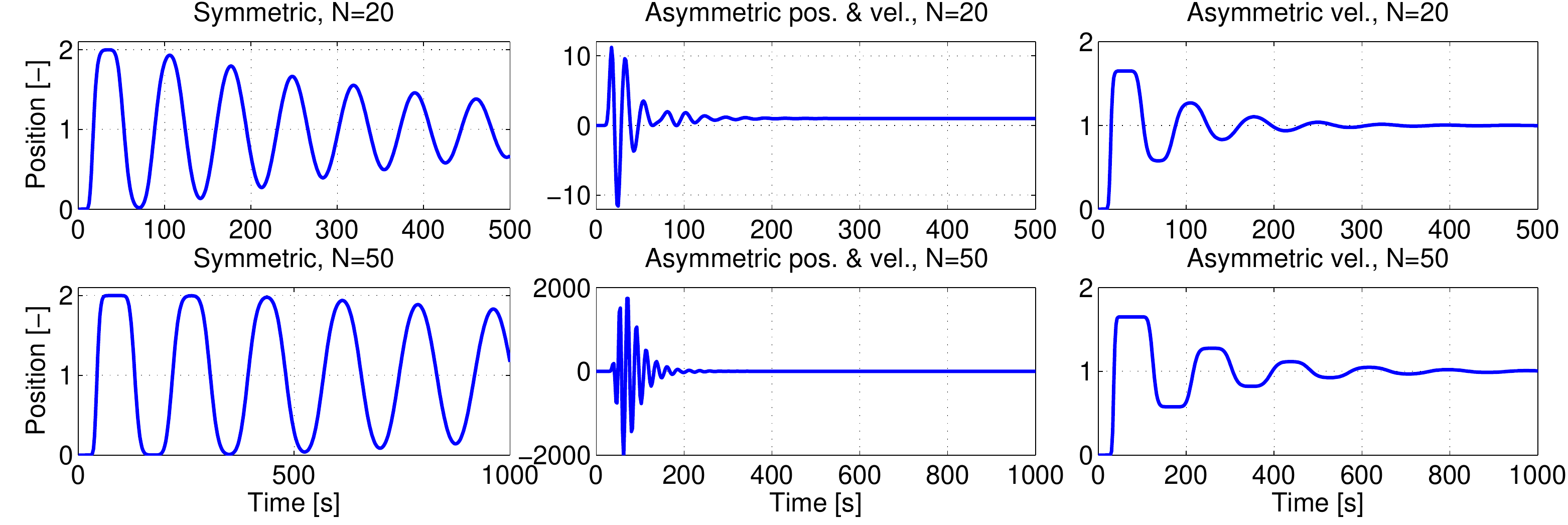}
  \caption{
  The numerical simulations showing the position of the last agent in the distributed system with path-graph topology when the leader changes its position from $0$ to $1$. The figure compares three different bidirectional control strategies: i) the symmetric (the left panels) defined by (\arabic{multi_eq1}a), ii) the traditional asymmetric control with asymmetries in both positional and velocity couplings (the middle panels), see (\arabic{multi_eq1}b), and iii) the combined symmetric positional with asymmetric velocity couplings (the right panels), see (\arabic{multi_eq1}c). The top and bottom panels show the system with 20 and 50 agents, respectively.}
  \label{fig:sym_asym_comp}
\end{figure*}

We can see that the symmetric bidirectional control has a very long transient (the left panel), which is shortened when the asymmetry is introduced to both positional and velocity couplings (the middle panel). However, the asymmetry in the positional coupling causes a large overshoot that even scales with the size of the graph, which is due to the local string instability. When the positional coupling is kept symmetric and the velocity coupling asymmetric (the right panel), the overshoot is smaller than for the symmetric case. Moreover, we can see that the transient are scaled approximately linearly with the size of the graph.


An independent validation of the AWTF approach is shown in Fig.~\ref{fig:wave_state_comp}. The numerical simulation shows the response of 20 agents with the path-graph topology, where
\begin{align}
  M_{\text{f}}(s) = \frac{1}{3}\frac{4s+4}{s^2(s/3+1)}, \;\;\;\; M_{\text{r}}(s) = \frac{1}{3}\frac{2.5s+4}{s^2(s/3+1)}.
\end{align}
We can see excellent agreement between the state-space approach based on (\ref{eq:eq1}) and the AWTF approach. The waves $A_{10}$ and $B_{10}$ are computed by (\ref{eq:com_1a}) and (\ref{eq:com_1b}). The wave $X_{10}$ is a sum of $A_{10}$ and $B_{10}$ and can be alternatively computed by (\ref{eq:com_2}). We can also see that the approximation $x_{10}(t) \approx a_{10}(t)$ holds in the time-domain for the first 20 seconds, then the wave returns back to the $10$th agent which causes an increase of $B_{10}$.

\begin{figure}[ht]
 \centering
  \includegraphics[width=0.49\textwidth]{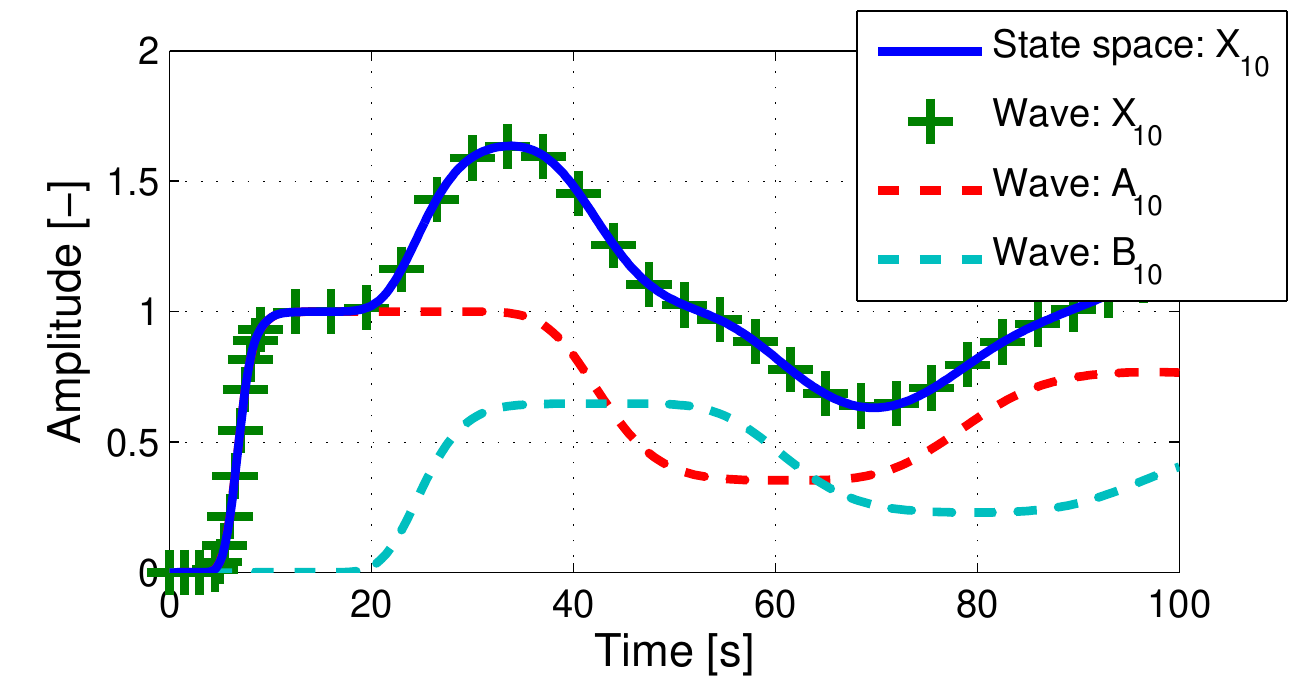}
  \caption{
  The comparison of the positions of the $10$th agent in the system considered at the top right panel in Fig.~\ref{fig:sym_asym_comp} simulated by the state-space approach using (\ref{eq:eq1}) (blue solid line) and by the AWTF's approach using (\ref{eq:anp_s}), (\ref{eq:bnp_s}) and Lemmas \ref{lem:AWTF_stab} and \ref{lem:refl} (green crosses). The two components $A_{10}$ and $B_{10}$ from (\ref{eq:pos_decomp}) are shown with the dashed red and blue lines, respectively. The response on the step change of $X_0$ is shown.}
  \label{fig:wave_state_comp}
\end{figure}

Fig.~\ref{fig:bode_comp} shows the numerical validation of Theorem~\ref{thm:two_integrators_Mf_Mr} for $M_{\text{f}}$ and $M_{\text{r}}$ defined by (\ref{eq:mat_sim_1}) and (\arabic{multi_eq1}b)-(\arabic{multi_eq1}c). We can see that, if there is asymmetry in the positional coupling (solid line), i.e. $n_{\text{f},0} \neq n_{\text{r},0}$, then the $\mathcal{H}_{\infty}$ norm of $G_+$ is greater than one. The norm is reduced to one by making the positional coupling symmetric (dashed line).
\begin{figure}[ht]
 \centering
  \includegraphics[width=0.48\textwidth]{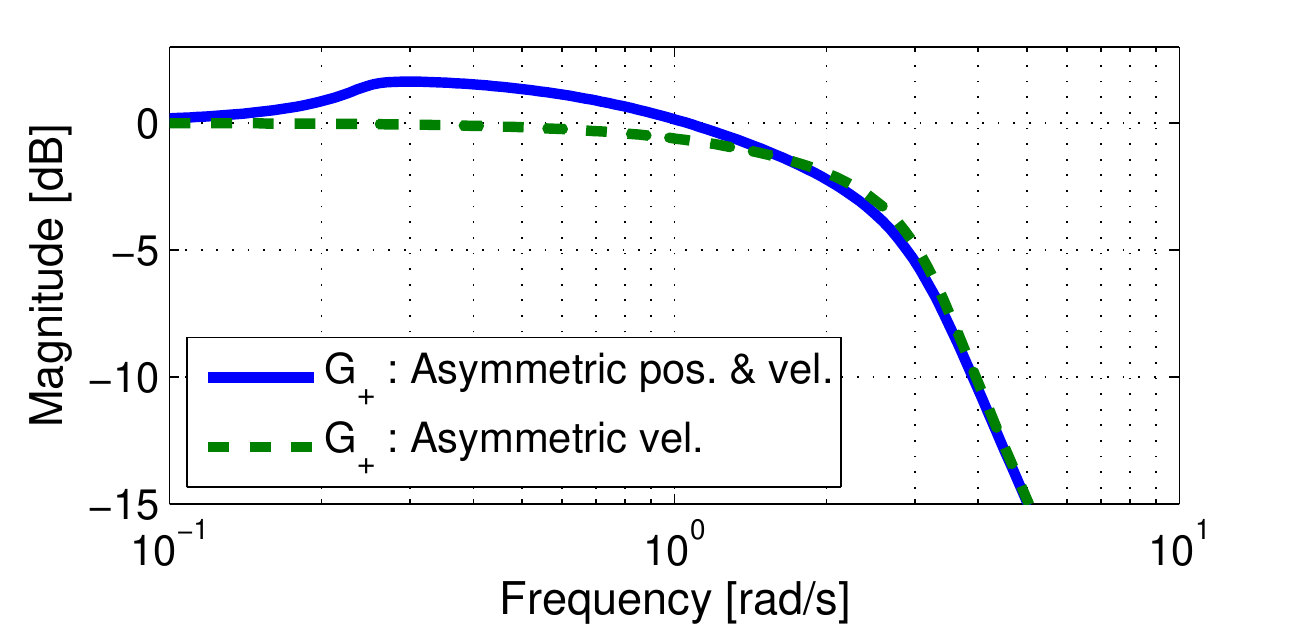}
  \caption{
  The comparison of the frequency characteristics for two different transfer functions $G_+(s)$. The asymmetries in the couplings are defined as in Fig.~\ref{fig:sym_asym_comp}.}
  \label{fig:bode_comp}
\end{figure}

{\color{\revA}
\section{Generalized path-graph topology}
}
The key part in the derivation of $G_{+}$ and $G_{-}$ is that the agent has exactly two neighbours. Therefore, we can apply the decomposition of the waves from (\ref{eq:pos_decomp})-(\ref{eq:AWTF_2}) to each agent that has two neighbours. In other words, if the agents are connected in {\color{\revA}a generalized path-graph topology, see Definition~\ref{def:generalized_path_graph}}, then $G_+$ and $G_-$ are the same.


{\color{\revA}
\begin{defn}\label{def:generalized_path_graph}
The \emph{generalized path graph} is defined as the path graph in Definition~\ref{def:path_graph}, except that the $N$th agent has more than two neighbours.
\end{defn}
Obviously, i) the condition (\ref{eq:eq2}) for the $N$th agent of the generalized path graph is relaxed, and ii) the system containing the generalized path graph has more than $N+1$ agents. Figure~\ref{fig:graph_topologies} shows the examples of a path graph and generalized path graphs.
}

\begin{figure}[ht]
 \centering
  \includegraphics[width=0.49\textwidth]{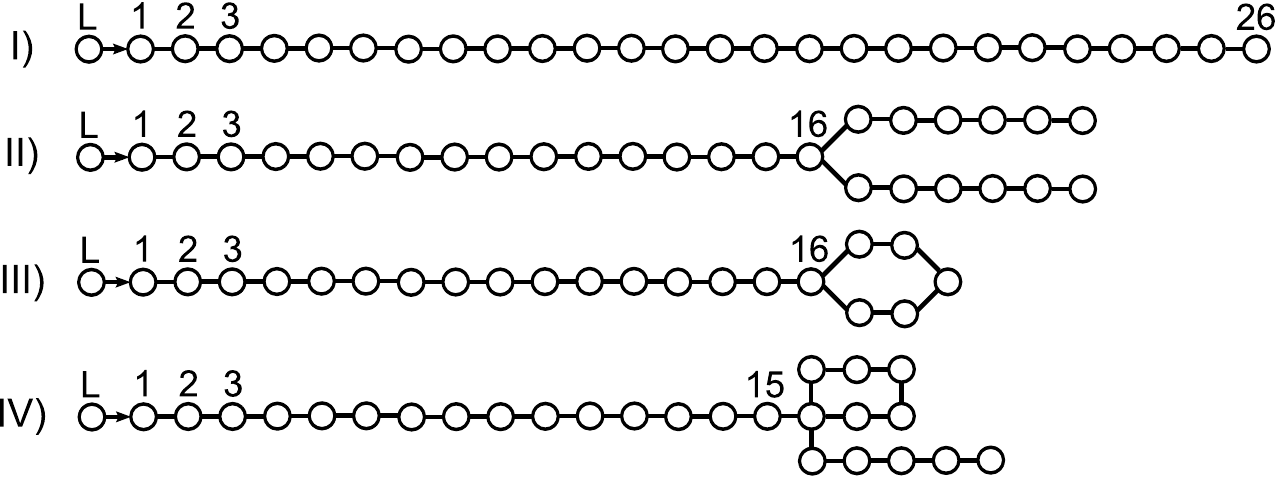}
 \caption{{\color{\revA}I) A path graph with $N=27$, and II)-IV) generalized path graphs with $N=17$.}}
  \label{fig:graph_topologies}
\end{figure}

{\color{\revA}The approximations (\ref{eq:com_3}) and (\ref{eq:com_4}) are based on the fact that the boundary effect of the $N$th agent does not influence the travelling wave from the leader for a certain initial time. Since the generalized path-graph topology modifies only the boundary condition on the $N$th agent, the same approximations (\ref{eq:com_3}) and (\ref{eq:com_4}) hold for both types of the graphs. Moreover,
\begin{theorem}\label{thm:sym_coupl_generalized}
  Theorem~\ref{thm:sym_coupl} holds for a generalized path graph.
\end{theorem}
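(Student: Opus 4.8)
The plan is to exploit the purely local character of the asymmetric wave transfer functions. The notion of local string stability is defined entirely in terms of $G_+(s)$ and $G_-(s)$, and by Lemma~\ref{lem:AWTF} these two functions are determined solely by $M_{\text{f}}(s)$ and $M_{\text{r}}(s)$, i.e. by the dynamics of a single agent that possesses exactly two neighbours and is governed by (\ref{eq:eq1}). Consequently, Theorem~\ref{thm:sym_coupl} is at bottom a statement about $G_+$ and $G_-$ alone: asymmetric positional coupling forces $||G_+(s)||_{\infty} > 1$ or $||G_-(s)||_{\infty} > 1$ via Theorem~\ref{thm:two_integrators_Mf_Mr}. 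The task therefore reduces to showing that the generalized path graph gives rise to exactly the same $G_+$ and $G_-$ as the ordinary path graph.

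First I would observe that, by Definition~\ref{def:generalized_path_graph}, the generalized path graph differs from the path graph of Definition~\ref{def:path_graph} only in the boundary condition imposed on the $N$th agent, while every interior agent is still described by (\ref{eq:eq1}). Hence the wave decomposition (\ref{eq:pos_decomp})--(\ref{eq:AWTF_2}) applies verbatim to each such agent, and the resulting $G_+(s)$ and $G_-(s)$ coincide with those of the path graph. The justification is the spatial causality of the boundary discussed in the Remark after Lemma~\ref{lem:AWTF}: the relations (\ref{eq:anp_s}) and (\ref{eq:bnp_s}) hold irrespective of the topological distance to, and the dynamics of, the agent terminating the chain, so replacing the simple rear-end agent (\ref{eq:eq2}) by an agent with more than two neighbours leaves $G_+$ and $G_-$ untouched.

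With this identification in hand, the argument of Theorem~\ref{thm:sym_coupl} carries over unchanged. Asymmetric positional coupling yields $n_{\text{f},0} \neq n_{\text{r},0}$, whence $\kappa \neq 1$, and Theorem~\ref{thm:two_integrators_Mf_Mr} delivers $||G_+(s)||_{\infty} > 1$ or $||G_-(s)||_{\infty} > 1$. Since these are the very AWTFs that govern wave propagation along the interior agents of the generalized path graph, the generalized system is locally string unstable by definition, which is the desired conclusion.

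The main obstacle I anticipate is conceptual rather than computational: one must verify that the generalized path graph still contains a genuine chain of agents satisfying (\ref{eq:eq1}), so that local string stability is a meaningful property there, and that the modified terminal agent cannot propagate its influence back into the AWTFs. Both points are secured by Definition~\ref{def:generalized_path_graph}, which preserves the interior agents, together with the spatial-causality remark. Once these are invoked no new estimate is required, and the statement follows immediately from Theorem~\ref{thm:sym_coupl}.
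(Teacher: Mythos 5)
Your proposal is correct and follows exactly the paper's own reasoning: the paper's proof likewise observes that the argument of Theorem~\ref{thm:sym_coupl} rests only on the AWTFs $G_+(s)$ and $G_-(s)$, which are determined by the interior-agent dynamics (\ref{eq:eq1}) and are independent of the boundary condition (\ref{eq:eq2}), so the conclusion transfers verbatim to the generalized path graph. Your additional appeal to the spatial-causality remark after Lemma~\ref{lem:AWTF} is precisely the justification the paper relies on, merely spelled out in more detail.
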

\begin{proof}
The proof of the local string instability is the same as that of Theorem~\ref{thm:sym_coupl}, since the proof is independent of the boundary condition (\ref{eq:eq2}).
\end{proof}}

\subsection{Mathematical simulations}
The Remark after Lemma~\ref{lem:AWTF} states that the boundary agent changes the way of how the wave is reflected from boundary but it does not affect the wave travelling towards or from it. We demonstrate this feature on the systems with topologies from Fig.~\ref{fig:graph_topologies}. We assume that all the agents have the identical dynamics defined by (\ref{eq:mat_sim_1}) and (\arabic{multi_eq1}a). We can see in Fig.~\ref{fig:boundary_effect_comp} {\color{\revA}and Fig.~\ref{fig:boundary_effect_comp_unstable}} that the responses of the agents are identical in all systems for about first 18 seconds, which corresponds to the time needed for the wave to travel from the leader to the boundary agent. After that, the wave is reflected from the boundary agent and reaches the agents again. In the case of the top-left panel, the wave reflects later.

{\color{\revA}We can also see that the numerical simulations confirms Theorem~\ref{thm:sym_coupl_generalized}. The initial responses of the systems with the symmetric positional coupling in Fig.~\ref{fig:boundary_effect_comp} do not amplify the disturbance. On the other hand, if the condition on the symmetric positional coupling is violated as in Fig.~\ref{fig:boundary_effect_comp_unstable}, then the disturbance is amplified as it propagates along the generalized path graph, which is predicted by Theorem~\ref{thm:sym_coupl_generalized}.

This example also shows us the main advantage of the wave transfer function approach and the local string stability concept. We can determine if the disturbance is amplified as it travels in the generalized path graph without the necessity to analyze the response of the entire distributed system. For instance, we can say that, if the agents do not have symmetric positional coupling, then the disturbance in the generalized path graph is amplified regardless of the interaction of the $N$th agent with the other parts of the system.}



\begin{figure}[ht]
 \centering
  \includegraphics[width=0.49\textwidth]{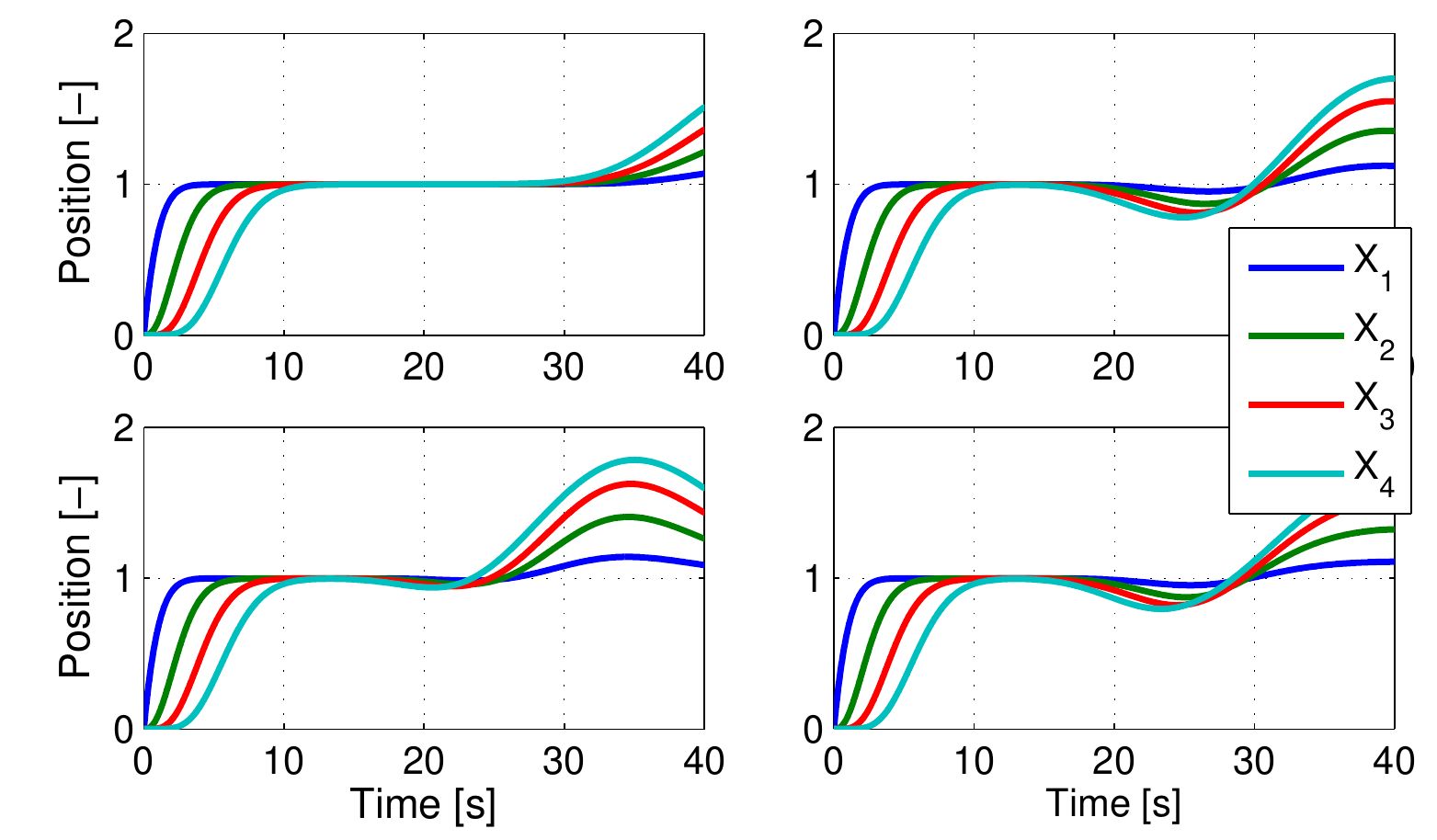}
  \caption{The comparison of the responses of the systems with four different interaction topologies. The topologies are given in Fig.~\ref{fig:graph_topologies} and they are related to this figures as follows: top-left panel - topology I); top-right panel - topology II); bottom-left panel - topology III); bottom-right panel - topology IV).}
  \label{fig:boundary_effect_comp}
\end{figure}

\begin{figure}[ht]
 \centering
  \includegraphics[width=0.49\textwidth]{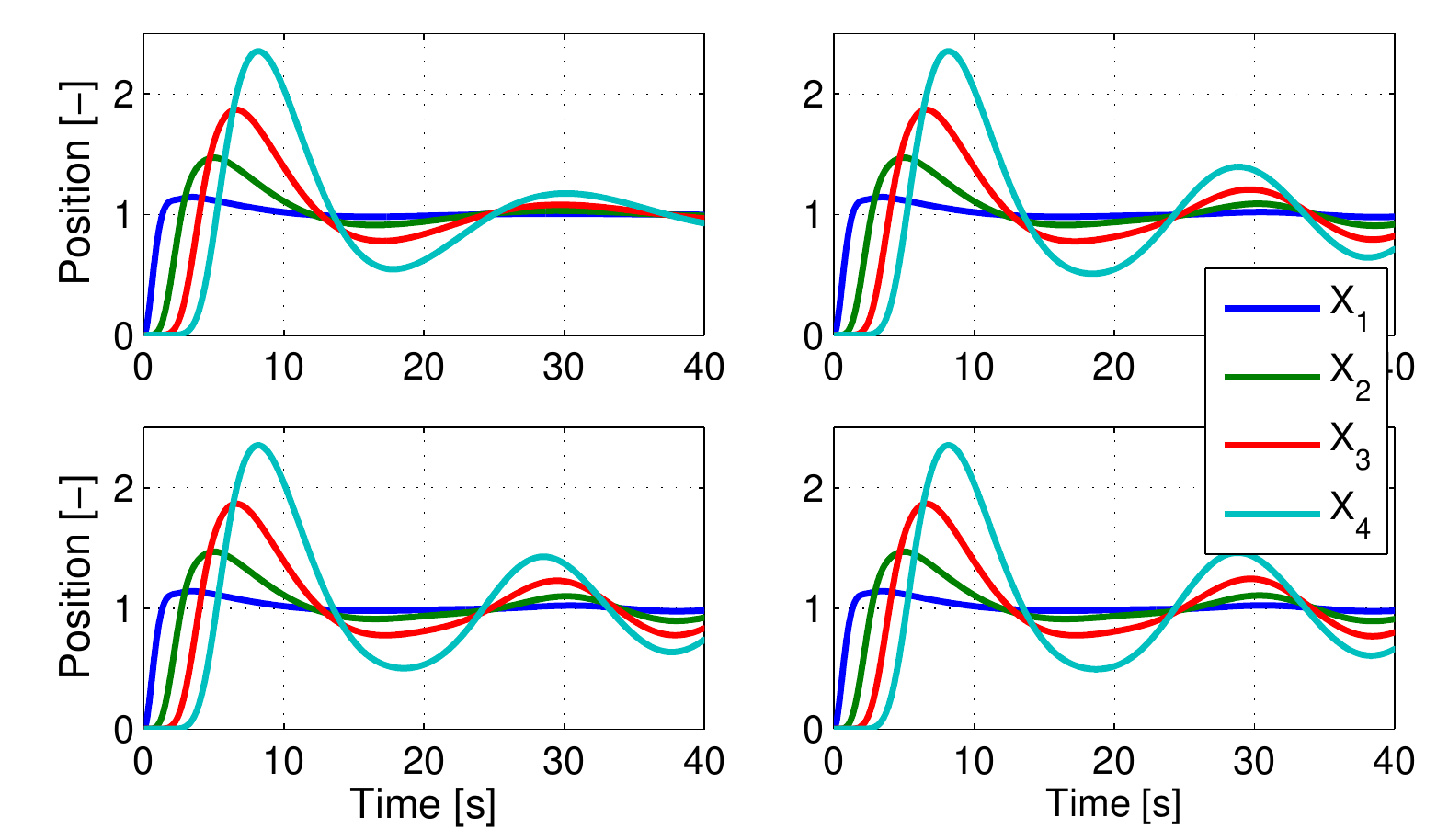}
  \caption{{\color{\revA}The same as in Fig~\ref{fig:boundary_effect_comp} but for the agents defined by (\ref{eq:mat_sim_1}) and (\arabic{multi_eq1}b).}}
  \label{fig:boundary_effect_comp_unstable}
\end{figure}


\section{Conclusions}
The paper examined a distributed system with constant-spacing policy and asymmetric bidirectional control, where the coupling between the agents is allowed to be arbitrarily complex. The proposed approach reveals that the symmetric positional coupling, i.e. identical DC gains of the controllers, is necessary for the string stability of the distributed system. This finding does not disprove the asymmetry for other couplings. In fact, it is numerically shown that, if the asymmetry in the velocity coupling is adjusted properly, then the system's performance may be improved.





\bibliographystyle{IEEEtran}
\bibliography{2015-Path_graph_asymmetry}

\end{document}